\numberwithin{equation}{section}
\numberwithin{figure}{section}
\theoremstyle{plain}
\theoremstyle{definition}
\theoremstyle{plain}
\setlist{nolistsep}
\def\expandafter\normalsize\expandafter{%
    \normalsize
    \setlength\belowdisplayskip{5pt}
    \setlength\belowdisplayshortskip{5pt}
}
\newcommand\HUGE{\@setfontsize\Huge{38}{47}}
\titleclass{\part}{top}
\titleformat{\part}[display]
  {\normalfont\HUGE\sffamily}{\partname\ \thepart}{0pt}  
  {\titlerule\vskip2pt\titlerule\vskip20pt\HUGE\bfseries\filleft}
\titlespacing*{\part} {0pt}{30pt}{50pt}
\titleformat{\chapter}[display]
  {\normalfont\LARGE\sffamily}{\chaptertitlename\ \thechapter}{0pt}  
  {\titlerule\vskip2pt\titlerule\vskip20pt\LARGE\bfseries\filleft}
\titlespacing*{\chapter} {0pt}{20pt}{30pt}
\newtheorem{theorem}{Theorem}[section]
\newtheorem{lemma}[theorem]{Lemma}
\theoremstyle{definition}
\newtheorem{definition}{Definition}
\providecommand{\definitionname}{Definition}
\providecommand{\lemmaname}{Lemma}
\providecommand{\theoremname}{Theorem}
\begin{document}
\global\long\def\defeq{\stackrel{\mathrm{{\scriptscriptstyle def}}}{=}}%
\global\long\def\norm#1{\left\Vert #1\right\Vert }%
\global\long\def\R{\mathbb{R}}%
\global\long\def\abs#1{\left\vert #1\right\vert }%
\global\long\def\bdiag{\mathbf{Diag}}%
\global\long\def\nnz{\mathrm{nnz}}%
\global\long\def\mh{\mathbf{H}}%
\global\long\def\mi{\mathbf{I}}%
\global\long\def\mv{\mathbf{V}}%
\global\long\def\mw{\mathbf{W}}%
\global\long\def\ms{\mathbf{S}}%
\global\long\def\mproj{\mathbf{P}}%
\global\long\def\msigma{\mathbf{\Sigma}}%
\global\long\def\ma{\mathbf{A}}%
 
\global\long\def\Rn{\mathbb{R}^{n}}%
\global\long\def\tr{\mathrm{Tr}}%
\global\long\def\poly{\mbox{poly}}%
\global\long\def\diag{\mathrm{diag}}%
\global\long\def\cov{\mathrm{Cov}}%
\global\long\def\E{\mathbb{E}}%
\global\long\def\P{\mathbb{P}}%
\global\long\def\Var{\mathrm{Var}}%
\global\long\def\rank{\mathrm{rank}}%
\global\long\def\Ent{\mathrm{Ent}}%
\global\long\def\vol{\mathrm{vol}}%
\global\long\def\spe{\mathrm{op}}%
\global\long\def\op{\mathrm{op}}%
\global\long\def\H{\mathbf{H}}%

\title{Strong Self-Concordance and Sampling}
\author{Aditi Laddha\thanks{Georgia Tech, aladdha6@gatech.edu}, Yin Tat Lee\thanks{University of Washington and Microsoft Research, yintat@uw.edu},
Santosh S. Vempala\thanks{Georgia Tech, vempala@gatech.edu}}
\maketitle
\begin{abstract}
Motivated by the Dikin walk, we develop aspects of an interior-point
theory for sampling in high dimension. Specifically, we introduce a symmetric parameter
and the notion of strong self-concordance. These properties imply that the corresponding
Dikin walk mixes in $\tilde{O}(n\bar{\nu})$ steps from a warm start
in a convex body in $\mathbb{R}^{n}$ using a strongly self-concordant barrier
with symmetric self-concordance parameter $\bar{\nu}$. For many natural
barriers, $\bar{\nu}$ is roughly bounded by $\nu$, the standard
self-concordance parameter. We show that this property and strong
self-concordance hold for the Lee-Sidford barrier. As a consequence,
we obtain the first walk to mix in $\tilde{O}(n^{2})$ steps for an
arbitrary polytope in $\mathbb{R}^{n}$. Strong self-concordance for other
barriers leads to an interesting (and unexpected) connection ---
for the universal and entropic barriers, it is implied by the KLS
conjecture.
\end{abstract}

\section{Introduction}

The interior-point method is one of the major successes of optimization,
in theory and practice \citep{karmarkar1984new,renegar1988polynomial,vaidya1989speeding}.
It has led to the currently asymptotically fastest algorithms for solving
linear and semidefinite programs and is a popular method for the
accurate solution of medium to large-sized instances. The results of Nesterov
and Nemirovski \citep{nesterov1994interior} demonstrate that $\nu=O(n)$
is possible for any convex set using their universal barrier, where $\nu$ is the self-concordance parameter of the barrier. For
linear programming with feasible region $\left\{ x\,:\,Ax\ge b\right\} $,
the simple logarithmic barrier $g(x)=-\sum_{i}\ln((Ax-b)_{i})$ has
$\nu=O(m)$ for an $m\times n$ constraint matrix $A$, and is efficiently
computable (the universal barrier is polytime to estimate, but requires
the computation of volume of a convex body). In progress over the
past decade, Lee and Sidford \citep{lee2014path,lee2015efficient,lee2019solving}
introduced a barrier for linear programming that achieves $\nu=O(n\log^{O(1)}(m))$
while being efficiently computable. The interior-point method has
also directly influenced the design of combinatorial algorithms, leading
to faster methods for maxflow/mincut and other optimization problems
\citep{madry2010fast,christiano2011electrical,sherman2013nearly,madry2013navigating,kelner2014almost,peng2016approximate,sherman2017area}.

Sampling convex bodies is a fundamental problem that has close connections
to convex optimization. Indeed, convex optimization can be reduced
to sampling \citep{bertsimas2004solving}. The most general methods that lead
to polynomial-time sampling algorithms are the ball walk and hit-and-run,
both requiring only membership oracle access to the convex set being sampled. These
methods are not affine-invariant, i.e., their complexity depends on
the affine position of the convex set. A tight bound on their complexity
is $O^{*}\left(n^{2}R^{2}/r^{2}\right)$ where the convex body contains
a ball of radius $r$ and is mostly contained in a ball of radius
$R$ \citep{kannan1997random,lovasz2007geometry,lovasz2006hit,lovasz2006fast}. The ratio $R/r$ can be made $O(\sqrt{n})$
for any convex body by a suitable affine transformation.
This effectively makes the complexity $O^{*}(n^{3})$. However, the rounding (e.g.,
by near-isotropic transformation) is an expensive step, and its current
best complexity is $O^{*}(n^{4})$ \citep{lovasz2006simulated}. Even for polytopes,
this the rounding/isotropic step takes $O(mn^{4.5})$ total time for a
polytope with $m$ inequalities using an improved amortized analysis
of the per-step complexity \citep{mangoubi2019faster}.

Interior-point theory offers an alternative sampling method with no
need for rounding. A convex barrier function, via its Hessian, naturally
defines an ellipsoid centered at each interior point of a convex body,
the \emph{Dikin }ellipsoid, which is always contained in the body.
The Dikin walk, at each step, picks a uniformly random point in the
Dikin ellipsoid around the current point. To ensure a uniform stationary
density, the new point is accepted with a probability that depends on
the ratio of the volumes of the Dikin ellipsoids at the two points,
see Algorithm \ref{algo:Dikin} below. Kannan and Narayanan \citep{kannan2012random}
showed that the mixing rate of this walk with the standard logarithmic
barrier is $O(mn)$ for a polytope in $\R^{n}$ defined using $m$
inequalities. Each step of the walk involves computing the determinant and can be done in time $O(mn^{\omega-1})$, leading to an overall
arithmetic complexity of $O(m^{2}n^{\omega})$ (see also \citep{sachdeva2016mixing}
for a shorter proof of a Gaussian variant). Using a different more continuous
approach, where each step is the solution of an ODE (rather than a
straight-line step), Lee and Vempala \citep{lee2018convergence} showed
that the Riemannian Hamiltonian Monte Carlo improves the mixing rate
for polytopes to $O(mn^{2/3})$ while keeping the same per-step complexity.
This leads to the following basic questions:
\begin{itemize}
\item What is the fastest possible mixing rate of a Dikin walk?
\item Is a mixing rate of $O(n)$ possible while keeping each step efficient (say
matrix multiplication time or less)?
\end{itemize}
These are the natural analogies to the progress in optimization, where
for the first, Nesterov and Nemirovski show a convergence rate to
the optimum of $O(\sqrt{n})$, and for the second, Lee and Sidford
show $\tilde{O}(\sqrt{n})$ for linear programming while maintaining
efficiency.

These questions, in the context of sampling, lead to new challenges.
Whereas for optimization, one step can be viewed as moving to the
optimum of the objective in the current Dikin ellipsoid (a Newton
step), for sampling, the next step is a random point in the Dikin
ellipsoid; and since these ellipsoids have widely varying volumes,
maintaining the correct stationary distribution takes some work. In
particular, one needs to show that with large probability, the Dikin
ellipsoids at the current point and proposed next point have volumes
within a constant factor; this would imply that a standard Metropolis
filter succeeds with large probability and there is no ``local''
conductance bottleneck. For global convergence, the two important
ingredients are showing that one-step distributions from nearby points
have a large overlap and a suitable isoperimetric inequality. Both
parts depart significantly from the Euclidean set-up as the notion
of distance is defined by local Dikin ellipsoids.

To address these challenges, in place of the
self-concordance parameter $\nu$, we have a symmetric self-concordance
parameter $\bar{\nu}$. It is the smallest number such that for any
point $u$ in a convex body $K$, with Dikin ellipsoid $E_{u}$, we have $E_{u}\subseteq K\cap(2u-K)\subseteq\sqrt{\bar{\nu}}E_{u}$.
In general $\bar{\nu}$ can be as high as $\nu^{2}$ but for some
important barriers, it is bounded as $O(\nu)$. This includes the
logarithmic barrier, and, as we show, the Lee-Sidford barrier. This
definition and parameter allows us to show that the isoperimetric
(Cheeger) constant for the Dikin distance is asymptotically at least
$1/\sqrt{\bar{\nu}}$.

We need a further, important refinement. The notion of self-concordance
itself bounds the rate of change of the Hessian of the barrier (i.e.,
the Dikin matrix) with respect to the local metric in the spectral
norm, i.e., the maximum change in any direction. We define \emph{strong
}self-concordance as the requirement that this derivative be bounded
in \emph{Frobenius }norm. Again, the logarithmic barrier satisfies
this property, and we show that the Lee-Sidford barrier does as well.

Our main general result then is that the Dikin walk defined using
any symmetric, strongly self-concordant barrier with convex Hessian
mixes in $O(n\bar{\nu})$ steps. We prove that the LS barrier satisfies
all these conditions with $\bar{\nu}=\tilde{O}(n)$ and so has a mixing
rate of $\tilde{O}(n^{2})$ for polytopes, completely answering the
second question, and improving on several existing bounds in \citep{chen2018fast,gustafson2018john}.
We also show that the Dikin walk with the standard logarithmic barrier
can be implemented in time $O(nnz(A)+n^{2})$ where $nnz(A)$ is the
number of nonzero entries in the constraint matrix $A$. This answers
the open question posed in \citep{lee2015efficient,kannan2012random}.
These results along with earlier work on sampling polytopes are collected
in Table \ref{tab:sampling}. We note that while for the Dikin walk
with a logarithmic barrier, there are simple examples showing that
the mixing rate of $O(mn)$ is tight (take a hypercube and duplicate
one of its facets $m-n$ times), for the Dikin walk with the LS barrier,
we are not aware of a tight example or one with mixing rate greater
than $\tilde{O}(n)$. There is the tantalizing possibility
that it mixes in nearly linear time. Thus, the overall arithmetic complexity for sampling a polytope is reduced to $m\cdot\min\left\{ \textrm{nnz}(A)\cdot n+n^{3}, n^{\omega+1}\right\}$ which improves the state of the art for all ranges of $m$.
\begin{table}
\centering
\caption{\label{tab:sampling}The complexity of uniform polytope sampling from
a warm start.}
\label{tab:compare}
\begin{tabular}{ccl}
\hline 
\toprule
Markov Chain & Mixing Rate & Per step cost\\
\midrule
Ball Walk{\footnotemark}\citep{kannan1997random} & $n^{2}R^{2}/r^{2}$ & $mn$\\
\hline 
Hit-and-Run{\footnotemark[1]}\citep{lovasz2006hit} & $n^{2}R^{2}/r^{2}$ & $mn$\\
\hline 
Dikin \citep{kannan2012random} & $mn$ & $mn^{\omega-1}$\\
\hline 
RHMC \citep{lee2018convergence} & $mn^{\frac{2}{3}}$ & $mn^{\omega-1}$\\
\hline 
Geodesic Walk\citep{lee2017geodesic} & $mn^{\frac{3}{4}}$ & $mn^{\omega-1}$\\
\hline 
John's Walk\citep{gustafson2018john} & $n^{7}$ & $mn^{4}+n^{8}$\\
\hline 
Vaidya Walk\citep{chen2018fast} & $m^{\frac{1}{2}}n^{\frac{3}{2}}$ & $mn^{\omega-1}$\\
\hline 
Approximate John Walk\citep{chen2018fast} & $n^{2.5}$ & $mn^{\omega-1}$\\
\hline 
\textbf{\textcolor{red}{Dikin (this paper)}} & $mn$ & \textbf{\textcolor{red}{$\nnz(A)+n^{2}$}}\\
\hline 
\textbf{\textcolor{red}{Weighted Dikin (this paper)}} & \textbf{\textcolor{red}{$n^{2}$}} & \textcolor{red}{$mn^{\omega-1}$}\\
\bottomrule
\end{tabular}
\end{table}

\footnotetext[1]{These entries are for general convex bodies
presented by oracles, with $R/r$ measuring the \emph{roundness} of
the input body; this can be made $O(\sqrt{n})$ with a rounding procedure
that takes $n^{4}$ steps (membership queries). \emph{After} rounding,
the amortized per-step complexity of the ball walk in a polytope is
$\tilde{O}(m)$.}
%\footnotetext[2]{Source for image B}
We also study the notions of symmetry and strong self-concorda\-nce introduced in this paper for three well-studied barriers, namely,
the classical universal barrier \citep{nesterov1994interior}, the
entropic barrier \citep{bubeck2014entropic} and the canonical barrier
\citep{hildebrand2014canonical}. While these barriers are not particularly
efficient to evaluate, they are interesting because they all achieve
the best (or nearly best) possible self-concordance parameter values
for arbitrary convex sets and convex cones (for the canonical barrier),
and have played an important role in shaping the theory of interior-point
methods for optimization. For the canonical barrier, the work of Hildebrand
already establishes the convexity of the log determinant function (by
definition of the barrier), and strong self-concordance \citep{hildebrand2014canonical}.
For the entropic and universal barriers, we present an unexpected
connection: the strong self-concordance is implied by the KLS isoperimetry
conjecture! This suggests the possibility of more fruitful connections
yet to be discovered using the notion of strong self-concordance.
\subsection{Dikin Walk}

The general Dikin walk is defined as follows. For a convex set $K$
with a positive definite matrix $\mh(u)$ for each point $u\in K$,
let
\[
E_{u}(r)=\left\{ x:\,(x-u)^{\top}\mh(u)(x-u)\le r^{2}\right\} .
\]

\begin{algorithm}
\SetAlgoLined
\DontPrintSemicolon
\SetKwData{Left}{left}\SetKwData{This}{this}\SetKwData{Up}{up}
\SetKwFunction{Union}{Union}\SetKwFunction{FindCompress}{FindCompress}
\SetKwInOut{Input}{input}\SetKwInOut{Output}{output}
\Input{starting point $x_{0}$ in a polytope $P=\left\{ x:\,\mathbf{A}x\ge b\right\} $}
\Output{$x_T$}
Set $r=\frac{1}{512}$\;
\For{$t\leftarrow 1$ \KwTo $T$}{ 
$x_t \leftarrow x_{t-1}$\;
Pick $y$ from $E_{x_t}(r)$\;
 $x_t \leftarrow y$ with probability $\min\left\{ 1,\frac{\vol(E_{x_t}(r))}{\vol(E_{y}(r))}\right\} $\;
}
\caption{$\mathtt{DikinWalk}$}
\label{algo:Dikin}
\end{algorithm}

\subsection{Strong Self-Concordance}

We require a family of matrices to have the following properties.
Usually but not necessarily, these matrices come from the Hessian
of some convex function.
\begin{definition}[Self-concordance]
For any convex set $K\subseteq\Rn$, we call a matrix function $\mathbf{H}:K\rightarrow\R^{n\times n}$ self-concordant if for any $x\in K$, we have
\[
-2\Vert h\Vert_{\mathbf{H}(x)}\mathbf{H}(x)\preceq\frac{d}{dt}\mathbf{H}(x+th)\preceq2\Vert h\Vert_{\mathbf{H}(x)}\mathbf{H}(x).
\]
\end{definition}
\begin{definition}[$\bar{\nu}$-Symmetry]
For any convex set $K\subseteq\Rn$, we call a matrix function $\mathbf{H}:K\rightarrow\R^{n\times n}$ $\bar{\nu}$-symmetric if for any $x\in K$, we have
\[
E_{x}(1)\subseteq K\cap(2x-K)\subseteq E_{x}(\sqrt{\bar{\nu}}).
\]
\begin{figure}[h]
 \centering
\includegraphics[width=0.5\linewidth]{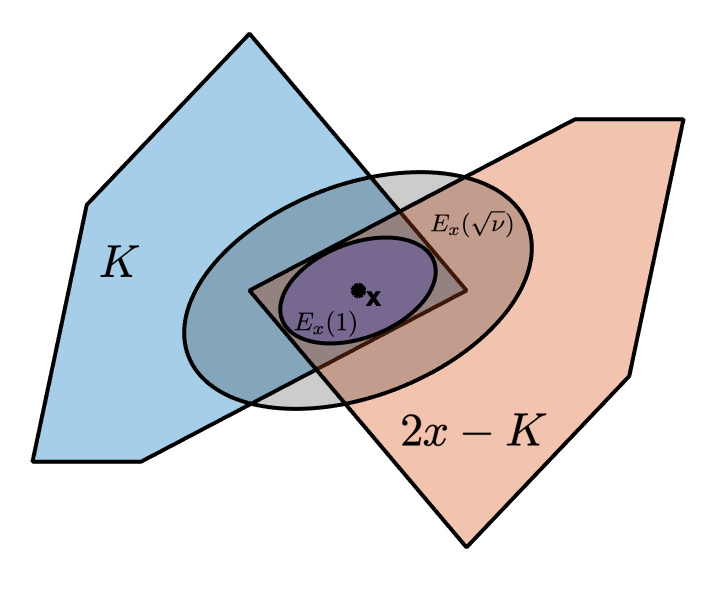}
\caption{$E_{u}(1)\subseteq K\cap(2u-K)\subseteq E_{u}(\sqrt{\bar{\nu}}).$}
%\Description{Figure showing symmetry for the ellipsoid induced at a point $x\in K$.}
\end{figure}
\end{definition}
The following lemma shows that self-concordant matrix functions also enjoy a similar regularity as the usual self-concordant functions.
\begin{lemma}
\label{lem:global_self_concordant}Given any self-concordant matrix
function $\mh$ on $K\subseteq\Rn$, we define $\|v\|_{x}^{2}=v^{\top}\mathbf{H}(x)v$.
Then, for any $x,y\in K$ with $\Vert x-y\Vert_{x}<1$, we have 
\[
\left(1-\Vert x-y\Vert_{x}\right)^{2}\mathbf{H}(x)\preceq\mathbf{H}(y)\preceq\frac{1}{\left(1-\Vert x-y\Vert_{x}\right)^{2}}\mathbf{H}(x).
\]
\end{lemma}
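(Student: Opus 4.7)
The plan is to follow the classical argument for self-concordant functions, adapted to the matrix setting. Fix $h=y-x$ and any test vector $v\neq 0$, and define the scalar functions
\[
\phi(t)=h^{\top}\mathbf{H}(x+th)\,h \quad\text{and}\quad \psi(t)=v^{\top}\mathbf{H}(x+th)\,v,\qquad t\in[0,1].
\]
Applying the self-concordance definition at the point $x+th$ in the direction $h$, sandwiching by $v$ (resp.\ by $h$), gives
\[
|\psi'(t)|\le 2\,\|h\|_{\mathbf{H}(x+th)}\,\psi(t)
\quad\text{and}\quad
|\phi'(t)|\le 2\,\|h\|_{\mathbf{H}(x+th)}\,\phi(t)=2\phi(t)^{3/2}.
\]
The natural target is a bound on $\psi(1)/\psi(0)$, but the differential inequality for $\psi$ involves the local norm of $h$ at $x+th$, not at $x$. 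So the first step will be to control that local norm using $\phi$.

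From $|\phi'(t)|\le 2\phi(t)^{3/2}$ I get $\bigl|\tfrac{d}{dt}\phi(t)^{-1/2}\bigr|\le 1$, hence
\[
\phi(t)^{-1/2}\ge \phi(0)^{-1/2}-t,
\]
which yields the pointwise estimate
\[
\|h\|_{\mathbf{H}(x+th)}=\phi(t)^{1/2}\le \frac{\|h\|_{x}}{1-t\,\|h\|_{x}}
\]
valid on $[0,1]$ since $\|h\|_{x}<1$. This is where the condition $\|x-y\|_{x}<1$ is used in an essential way.

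Plugging this bound back into the inequality for $\psi$ gives
\[
\Bigl|\tfrac{d}{dt}\log\psi(t)\Bigr|\le \frac{2\,\|h\|_{x}}{1-t\,\|h\|_{x}},
\]
and integrating from $0$ to $1$ produces
\[
\bigl|\log\psi(1)-\log\psi(0)\bigr|\le -2\log(1-\|h\|_{x}),
\]
i.e.\ $(1-\|h\|_{x})^{2}\psi(0)\le\psi(1)\le (1-\|h\|_{x})^{-2}\psi(0)$. Since $v$ was arbitrary, this is exactly the claimed operator inequality.

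The only nontrivial step is the bootstrapping of the local norm of $h$ along the segment; once that estimate is in hand, the rest is a one-line Gronwall-type integration. I would expect most of the effort to go into being careful that the segment $[x,y]$ stays inside $K$ and that $\phi$ stays positive, both of which follow from $\|h\|_{x}<1$ together with the fact that $\mathbf{H}$ is positive definite throughout $K$.
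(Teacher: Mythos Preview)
Your proposal is correct and follows essentially the same route as the paper: bound $\phi(t)=\|h\|_{x_t}^{2}$ via the ODE $|\phi'|\le 2\phi^{3/2}$, feed the resulting estimate on $\sqrt{\phi(t)}$ into $|\tfrac{d}{dt}\log\psi(t)|\le 2\sqrt{\phi(t)}$, and integrate. One small remark: the segment $[x,y]$ lies in $K$ simply because $K$ is convex, not because $\|h\|_{x}<1$; that hypothesis is only needed so that $\phi(0)^{-1/2}-t>0$ on $[0,1]$.
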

Proof in \ref{proof:2}. Many natural barriers, including the logarithmic barrier and the LS-barrier,
satisfy a much stronger condition than self-concordance,  which we define here.
\begin{definition}[Strong Self-Concordance]
For any convex set $K\subseteq\Rn$, we say a matrix function $\mathbf{H}:K\rightarrow\R^{n\times n}$
is strongly self-concordant if for any $x\in K$, we have
\[
\norm{\mathbf{H}(x)^{-1/2}D\mathbf{H}(x)[h]\mathbf{H}(x)^{-1/2}}_{F}\le2\norm h_{x}
\]
where $D\mathbf{H}(x)[h]$ is the directional derivative of $\mh$
at $x$ in the direction $h$.
\begin{figure}[h]
 \centering
\includegraphics[width=0.5\linewidth]{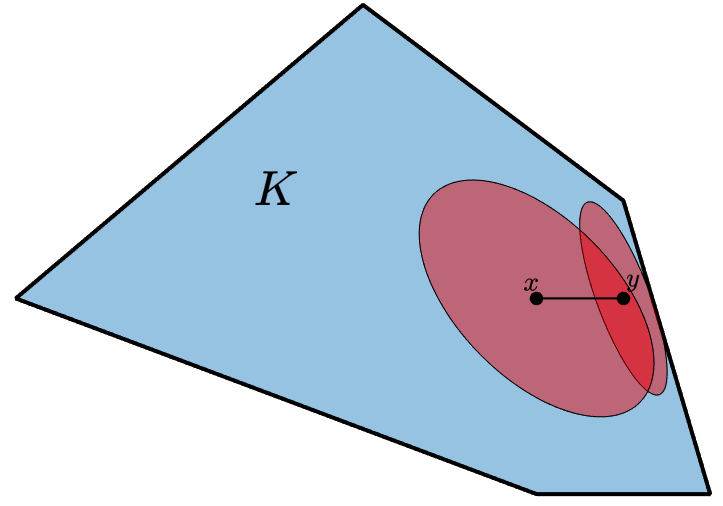}
\caption{Strong self-concordance measures the rate of change of Hessian of a barrier in the Frobenius norm}
%\Description{Figure showing a polytope with 2 Dikin ellipsoids on close by points.}
\end{figure}
\end{definition}
Similar to Lemma \ref{lem:global_self_concordant}, we have a global
version of strong self-concordance.
\begin{lemma}
\label{lem:global_strongly_self_concordant}Given any strongly self-concordant
matrix function $\mh$ on $K\subset\Rn$. For any $x,y\in K$ with
$\Vert x-y\Vert_{x}<1$, we have
\[
\|\mh(x)^{-\frac{1}{2}}(\mh(y)-\mh(x))\mh(x)^{-\frac{1}{2}}\|_{F}\leq\frac{\|x-y\|_{x}}{(1-\|x-y\|_{x})^{2}}.
\]

\end{lemma}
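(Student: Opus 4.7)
The plan is to integrate the strong self-concordance bound along the straight-line segment from $x$ to $y$, with the main subtlety being that strong self-concordance is most naturally applied in the $\mathbf{H}(z)^{-1/2}$ normalization at each intermediate point $z$, whereas the conclusion is normalized at $x$.

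First I set $h=y-x$, $s=\Vert h\Vert_x<1$, and $z(t)=x+th$ for $t\in[0,1]$. By the fundamental theorem of calculus applied to the matrix-valued function $t\mapsto \mathbf{H}(x)^{-1/2}\mathbf{H}(z(t))\mathbf{H}(x)^{-1/2}$,
\[
\mathbf{H}(x)^{-1/2}(\mathbf{H}(y)-\mathbf{H}(x))\mathbf{H}(x)^{-1/2}=\int_{0}^{1}\mathbf{H}(x)^{-1/2}D\mathbf{H}(z(t))[h]\mathbf{H}(x)^{-1/2}\,dt.
\]
Taking Frobenius norms and pulling the norm inside the integral via the triangle inequality reduces the task to a pointwise bound on the integrand.

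To bound the integrand, I insert $\mathbf{H}(z(t))^{1/2}\mathbf{H}(z(t))^{-1/2}=\mathbf{I}$ on both sides to rewrite it as $ABA^{\top}$ with $A=\mathbf{H}(x)^{-1/2}\mathbf{H}(z(t))^{1/2}$ and $B=\mathbf{H}(z(t))^{-1/2}D\mathbf{H}(z(t))[h]\mathbf{H}(z(t))^{-1/2}$. The Frobenius/operator Hölder inequality $\Vert ABC\Vert_{F}\leq\Vert A\Vert_{\op}\Vert B\Vert_{F}\Vert C\Vert_{\op}$, together with the identity $\Vert A\Vert_{\op}^{2}=\Vert\mathbf{H}(x)^{-1/2}\mathbf{H}(z(t))\mathbf{H}(x)^{-1/2}\Vert_{\op}$, then splits the bound into (i) a spectral comparison of $\mathbf{H}$ at $x$ and $z(t)$, and (ii) the strong self-concordance of $\mathbf{H}$ at $z(t)$ applied in direction $h$, which yields $\Vert B\Vert_{F}\leq 2\Vert h\Vert_{z(t)}$.

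Since $\Vert z(t)-x\Vert_{x}=ts<1$, Lemma~\ref{lem:global_self_concordant} gives $\mathbf{H}(z(t))\preceq(1-ts)^{-2}\mathbf{H}(x)$, so $\Vert A\Vert_{\op}^{2}\leq(1-ts)^{-2}$ and $\Vert h\Vert_{z(t)}\leq(1-ts)^{-1}\Vert h\Vert_{x}=s/(1-ts)$. Combining, the integrand is bounded by $2s/(1-ts)^{3}$, and integrating
\[
\int_{0}^{1}\frac{2s}{(1-ts)^{3}}\,dt=\frac{1}{(1-s)^{2}}-1=\frac{s(2-s)}{(1-s)^{2}}
\]
delivers a bound of the claimed form $s/(1-s)^{2}$ up to an absolute constant. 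The main obstacle is the basis-change step: strong self-concordance is stated in the metric at $z(t)$ while the lemma is phrased in the metric at $x$, and Lemma~\ref{lem:global_self_concordant} is exactly the tool that makes this translation lossless up to the $(1-ts)^{-2}$ factor that drives the integral.
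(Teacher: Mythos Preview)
Your approach is essentially identical to the paper's: write the difference as an integral along the segment, pull the Frobenius norm inside, use Lemma~\ref{lem:global_self_concordant} to convert the $\mathbf{H}(x)^{-1/2}$ normalization to $\mathbf{H}(z(t))^{-1/2}$ at a cost of $(1-ts)^{-2}$, apply strong self-concordance at $z(t)$, and integrate $2s/(1-ts)^{3}$. Your candid remark that this yields $s(2-s)/(1-s)^{2}$ rather than the stated $s/(1-s)^{2}$ is correct---the paper's sharper constant comes from what appears to be a slip (it writes $\Vert x-x_t\Vert_{x_t}$ in place of $\Vert y-x\Vert_{x_t}$ when invoking strong self-concordance, inserting an extra factor of $t$ in the integrand), and the harmless factor $2-s\le 2$ does not affect the downstream use in Lemma~\ref{lem:dikin-one-step}.
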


Proof in \ref{proof:4}. We note that strong self-concordance is stronger than self-concordance
since the Frobenius norm is always larger or equal to the spectral
norm. As an example, we will verify that the conditions hold for the
standard log barrier (Lemma \ref{lem:dssc}).

\subsection{Results}
%on a polytope $K\subseteq \mathbb{R}^n$
Our first theorem is the following.

\begin{restatable}{theorem}{dikingeneral}\label{lem:dikingeneral}The
mixing rate of the Dikin walk for a $\bar{\nu}$-symmetric, strongly self-concordant
matrix function with convex log determinant is $O(n\bar{\nu})$.\end{restatable}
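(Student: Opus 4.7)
The plan is to apply the Lovász--Simonovits conductance framework: establish $\Phi = \Omega(1/\sqrt{n\bar{\nu}})$ for the $s$-conductance of the walk, then conclude warm-start mixing $O(\log(1/s)/\Phi^{2}) = \tilde{O}(n\bar{\nu})$. The conductance bound factors into (A) a one-step coupling showing that if $\|x-y\|_{x} \leq c_{0}/\sqrt{n}$ then $d_{TV}(P_{x}, P_{y}) \leq 1 - c_{1}$, and (B) an isoperimetric inequality for the Dikin metric on $K$ with Cheeger constant $\Omega(1/\sqrt{\bar{\nu}})$.

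For (A), I treat the Metropolis acceptance and the proposal distribution separately. The acceptance probability $\alpha(x) = \E_{y \sim E_{x}(r)}[\min(1, \sqrt{\det \H(y)/\det \H(x)})]$ is bounded below by exploiting that $E_{x}(r) \subseteq E_{x}(1) \subseteq K \cap (2x - K)$ (by $\bar{\nu}$-symmetry), so the involution $y \mapsto 2x - y$ carries $E_{x}(r)$ into $K$ and preserves the uniform measure. Convexity of $\log\det \H$ applied at the midpoint $x = \tfrac{1}{2}(y + (2x - y))$ gives $\log\det \H(y) + \log\det \H(2x - y) \geq 2\log\det \H(x)$, so at least one of $y, 2x - y$ has $\det \H \geq \det \H(x)$; symmetry then yields $\P_{y}[\det \H(y) \geq \det \H(x)] \geq 1/2$, whence $\alpha(x) \geq 1/2$. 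For the proposal distributions (uniform on $E_{x}(r), E_{y}(r)$), set $\delta = \|x - y\|_{x}$: Lemma \ref{lem:global_strongly_self_concordant} gives $\|\H(x)^{-1/2}(\H(y) - \H(x))\H(x)^{-1/2}\|_{F} = O(\delta)$, and Cauchy--Schwarz on the trace yields $|\log\det \H(y) - \log\det \H(x)| = O(\sqrt{n}\,\delta)$. For $\delta = \Theta(1/\sqrt{n})$ this is $O(1)$ and the shape-matrix perturbation is $O(1/\sqrt{n})$ in operator norm, so a direct computation of $\vol(E_{x}(r) \cap E_{y}(r))$ shows the overlap is a constant fraction of each ellipsoid, giving TV distance at most $1 - c_{1}$.

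For (B), fix $u \in K$ and a direction $h$, and let $d^{+}, d^{-}$ denote the Euclidean distances from $u$ to $\partial K$ along $+h$ and $-h$ respectively. The inclusion $K \cap (2u - K) \subseteq E_{u}(\sqrt{\bar{\nu}})$ says that $u \pm w \in K$ implies $\|w\|_{u} \leq \sqrt{\bar{\nu}}$; taking $w = th$ with the maximal admissible $t = \min(d^{+}, d^{-})$ yields $\|h\|_{u} \leq \sqrt{\bar{\nu}}/\min(d^{+}, d^{-})$. Using $1/\min(d^{+}, d^{-}) \leq 1/d^{+} + 1/d^{-}$ and recognizing $\tfrac{1}{2}(1/d^{+} + 1/d^{-})$ as the infinitesimal Hilbert (cross-ratio) length element at $u$ along $h$, Dikin length is at most $2\sqrt{\bar{\nu}}$ times Hilbert length. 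The classical Lovász isoperimetric inequality for convex bodies has Cheeger constant $\Omega(1)$ in the cross-ratio metric, which transfers to Cheeger constant $\Omega(1/\sqrt{\bar{\nu}})$ in the Dikin metric.

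Combining (A) and (B) via the standard conductance argument gives $\Phi = \Omega(1/\sqrt{n\bar{\nu}})$ and the claimed $O(n\bar{\nu})$ mixing bound. The main obstacle is the proposal TV estimate in (A): because the walk uses a proposal of constant radius $r = 1/512$ rather than $r = \Theta(1/\sqrt{n})$, ordinary self-concordance yields only $|\log\det \H(y) - \log\det \H(x)| = O(n\,\delta)$ via $|\operatorname{tr}(B)| \leq n\|B\|_{\op}$, which would force $\delta = O(1/n)$ and cost a factor of $n$ in mixing. The Frobenius bound from strong self-concordance saves $\sqrt{n}$ in this trace estimate via $|\operatorname{tr}(B)| \leq \sqrt{n}\|B\|_{F}$, which is exactly what makes $\delta = \Theta(1/\sqrt{n})$---and hence the $O(n\bar{\nu})$ mixing rate---achievable.
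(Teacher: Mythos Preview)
Your overall plan---conductance via a one-step overlap lemma plus the cross-ratio isoperimetric inequality---is exactly the paper's. Part (B) is the same as Lemma~\ref{lem:dikin-dist}: the paper bounds $\|p-x\|_x\le\sqrt{\bar\nu}$ for the nearer chord endpoint $p\in K\cap(2x-K)$ and reads off $d_K(x,y)\ge\|x-y\|_x/\sqrt{\bar\nu}$ directly, which is your inequality $\|h\|_u\le\sqrt{\bar\nu}/\min(d^+,d^-)$ in different clothing. The genuine methodological difference is in the acceptance probability. The paper does \emph{not} use your involution $z\mapsto 2x-z$; it uses convexity in first-order form $f(z)-f(x)\ge\langle\nabla f(x),z-x\rangle$, bounds $\|\H(x)^{-1/2}\nabla f(x)\|\le 2\sqrt n$ via strong self-concordance (this is where the paper spends the inequality $|\tr(\cdot)|\le\sqrt n\|\cdot\|_F$), and then uses concentration of a random point in a ball to get $\text{rej}_x\le 0.004$. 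Your midpoint-convexity argument is more elegant and, notably, uses only convexity of $\log\det\H$ and the inclusion $E_x(r)\subseteq 2x-K$, \emph{not} strong self-concordance---but it only yields $\text{rej}_x\le 1/2$. That is a real quantitative loss: with the paper's choice $\|x-y\|_x\le r/\sqrt n$ the proposal-TV term is already about $e/(e+1)\approx 0.73$, so plugging $\text{rej}\le 1/2$ into the decomposition~\eqref{eq:def} is vacuous. Your route still closes, but only if you take $c_0\ll r$ so that the center-shift overlap of the two $r$-balls is (say) at least $0.9$; this costs only constants, but you should flag it rather than leave $c_0,c_1$ unspecified.

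There is also an imprecision in your proposal-TV sketch: neither ``$|\log\det\H(y)-\log\det\H(x)|=O(1)$'' nor ``operator-norm perturbation $O(1/\sqrt n)$'' controls $\vol(E_x(r)\cap E_y(r))$. The operator bound only sandwiches one ellipsoid between balls of radii $r(1\pm O(1/\sqrt n))$, whose volume ratio is $e^{\pm\Theta(\sqrt n)}$; and the log-det bound controls the \emph{signed} sum $\sum_i(\lambda_i-1)$, which says nothing about the inscribed ellipsoid. What the paper actually uses---and what your Cauchy--Schwarz remark really buys---is the unsigned bound $\sum_i|\lambda_i-1|\le\sqrt n\,\|\H(x)^{-1/2}(\H(y)-\H(x))\H(x)^{-1/2}\|_F=O(1)$, which gives $\prod_{\lambda_i<1}\lambda_i\ge e^{-O(1)}$ and $\prod_{\lambda_i>1}\lambda_i\le e^{O(1)}$ separately and hence a constant-fraction common volume. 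Same Frobenius input, but applied to $\sum|\lambda_i-1|$ rather than to $|\tr|$.
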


This implies faster mixing and sampling for polytopes using the LS
barrier (see Sec. \ref{subsec:LS-Barrier} for the definition).
\begin{theorem}
\label{thm:mixing-ls}The mixing rate of the Dikin walk based on the
LS barrier for any polytope in $\R^{n}$ is $\tilde{O}(n^{2})$ and
each step can be implemented in $\tilde{O}(mn^{\omega-1})$\footnote{We use $\tilde{O}$ to hide factors polylogarithmic in $n,m$.}
arithmetic operations. 
\end{theorem}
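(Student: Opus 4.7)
The plan is to invoke Theorem~\ref{lem:dikingeneral} with the LS barrier playing the role of the strongly self-concordant, $\bar{\nu}$-symmetric matrix function with convex log determinant. The argument therefore splits cleanly into (i) verifying these three structural properties for the LS Hessian $\mh_{\mathrm{LS}}(x)$, and (ii) implementing a single Dikin step within the claimed arithmetic budget. Once (i) is established with $\bar{\nu}=\tilde{O}(n)$, plugging into the $O(n\bar{\nu})$ mixing bound of Theorem~\ref{lem:dikingeneral} yields the announced $\tilde{O}(n^{2})$ rate.

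For (i), I would work with the representation $\mh_{\mathrm{LS}}(x)\approx\ma^{\top}\ms(x)^{-1}\mw(x)\ms(x)^{-1}\ma$, where $\ms(x)=\diag(\ma x-b)$ is the slack and $\mw(x)$ is the diagonal matrix of Lewis weights. \emph{(a) Strong self-concordance.} Differentiate in a direction $h$ and split $D\mh_{\mathrm{LS}}(x)[h]$ into a slack-derivative piece, which is the standard logarithmic-barrier contribution and can be bounded in Frobenius norm by the same diagonal reduction used in Lemma~\ref{lem:dssc}, and a weight-derivative piece $D\mw(x)[h]$, which is controlled by implicitly differentiating the Lewis-weight defining equation and solving a contractive linear system. \emph{(b) Symmetry.} Since Lewis weights sum to $\tilde{O}(n)$ and the weighted Dikin ellipsoid dominates the slack-scaled one, a direct cross-ratio bound gives $E_{x}(1)\subseteq K\cap(2x-K)\subseteq E_{x}(\sqrt{\tilde{O}(n)})$. \emph{(c) Convex log determinant.} This is essentially the variational property that defines the LS barrier, inherited from the weight construction.

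For (ii), each step must produce a Cholesky factorization of $\mh_{\mathrm{LS}}(x_{t})$ both to sample a uniform point in $E_{x_{t}}(r)$ and to evaluate the Metropolis ratio $\sqrt{\det\mh_{\mathrm{LS}}(x_{t})/\det\mh_{\mathrm{LS}}(y)}$. Computing the Lewis weights reduces to a polylogarithmic number of reweighted leverage-score iterations; each iteration is dominated by assembling and factoring a matrix of the form $\ma^{\top}D\ma$ for a diagonal $D\succeq0$, which fits in $\tilde{O}(mn^{\omega-1})$ by standard fast linear algebra. The subsequent factorizations and log-determinants at $x_{t}$ and $y$ fit in the same budget, and because $r=\Theta(1)$ guarantees $x_{t}$ and $y$ are close in the Dikin metric, the factorization at $y$ can alternatively be obtained by a low-rank update from that at $x_{t}$, keeping the amortized per-step cost within $\tilde{O}(mn^{\omega-1})$.

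The main obstacle is part (a), strong self-concordance of the LS barrier. Unlike the plain log barrier, the weights $\mw(x)$ are defined only implicitly as the fixed point of a reweighted-leverage-score map, and the Frobenius-norm bound on $\mh^{-1/2}D\mh[h]\mh^{-1/2}$ requires quantitatively controlling the Jacobian of the Lewis-weight map in the local Dikin geometry and showing that the weight-derivative contribution does not inflate the Frobenius norm past what the slack-derivative piece already allows, even though the weights themselves sum to $\tilde{O}(n)$ rather than $O(1)$.
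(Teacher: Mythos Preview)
Your plan is essentially the paper's: verify strong self-concordance, $\bar{\nu}$-symmetry with $\bar{\nu}=\tilde O(n)$, and convexity of $\ln\det$ for a matrix function associated to the LS barrier, then invoke Theorem~\ref{lem:dikingeneral}; the per-step cost is handled separately by citing Lee--Sidford's weight-maintenance result.

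Two points where your sketch drifts from what the paper actually does are worth flagging. First, the paper does \emph{not} work with the Hessian $\nabla^{2}\psi$ of the LS barrier but with the simpler ``LS matrix'' $\mh(x)=(1+q^{2})(1+q)\,\ma_{x}^{\top}\mw_{x}^{1-2/q}\ma_{x}$, $q=2(1+\ln m)$. The specific exponent $1-2/q$ and the prefactor are not cosmetic: after reducing $\|\mh^{-1/2}D\mh[h]\mh^{-1/2}\|_{F}^{2}$ to $(d\ln v/dt)^{\top}\mproj^{(2)}(d\ln v/dt)\le(d\ln v/dt)^{\top}\msigma(d\ln v/dt)$ and splitting into slack and weight pieces exactly as you propose, the weight piece is bounded via the Lee--Sidford estimate $\|\mw^{-1}\dot w\|_{\mw}\le q\|h\|_{\nabla^{2}\psi}$, and the prefactor is precisely what absorbs the resulting $(1+q^{2})(1+q)$ loss to land on the constant~$2$. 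Second, your justification for (c) is too vague: the actual argument is that with this particular $\mh$ one has $\ln\det\mh(x)=2\psi(x)+\text{const}$ (using $\sum_{i}w_{x,i}=n$), so convexity of $\ln\det\mh$ is literally convexity of the LS barrier $\psi$. If you tried to run (a) and (c) with $\nabla^{2}\psi$ itself, or with $\mw$ in place of $\mw^{1-2/q}$, neither would go through cleanly. The low-rank update you mention for the per-step cost is unnecessary; recomputing $\mh$ from scratch already fits in $\tilde O(mn^{\omega-1})$.
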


On a related note, we show that each step of the standard Dikin walk
is fast, and does not need matrix multiplication.
\begin{theorem}
\label{thm:dikin}The Dikin walk with the logarithmic barrier for
a polytope $\{\mathbf{A}x\ge b\}$ can be implemented in time $O(\nnz(\mathbf{A})+n^{2})$
per step while maintaining the mixing rate of $O(mn)$. See \ref{section4}.
\end{theorem}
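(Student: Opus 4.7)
The target is to implement each Dikin step in $O(\nnz(\ma)+n^{2})$ time while preserving the $O(mn)$ mixing bound of \citep{kannan2012random}. Per step we must (i) draw $y$ uniformly from $E_{x}(r)$, (ii) evaluate the Metropolis ratio $\sqrt{\det\mh(y)/\det\mh(x)}$, and (iii) accept or reject. The structural fact to exploit is that $\mh(x)=\ma^{\top}\ms_{x}^{-2}\ma=B^{\top}B$ with $B=\ms_{x}^{-1}\ma\in\R^{m\times n}$, so matvecs with $B$ or $B^{\top}$ cost $O(\nnz(\ma))$ and we never form $\mh$ explicitly.

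For sampling, my plan is to draw $\tilde g\sim N(0,I_{m})$ and approximately solve the normal equations $\mh(x)z=B^{\top}\tilde g$ by preconditioned conjugate gradient, then output $y=x+rV^{1/n}\cdot z/\|z\|_{\mh(x)}$ with $V\sim\mathrm{Unif}(0,1)$; because $z\sim N(0,\mh(x)^{-1})$, this is uniform in $E_{x}(r)$. The preconditioner is a sparse spectral approximation of $\mh$ kept in an $n\times n$ workspace and refreshed only when $\ms_{x}$ has drifted by a constant factor; along the walk that drift is small enough that a constant number of CG iterations per step suffice for $1/\poly(m,n)$ accuracy. For the Metropolis ratio, we estimate $\log\det(I+\Delta)$ with $\Delta=\mh(x)^{-1/2}(\mh(y)-\mh(x))\mh(x)^{-1/2}$ via a Hutchinson-style stochastic trace estimator on a constant-degree Taylor truncation of $\log(I+\Delta)$; since $r=\tfrac{1}{512}$ and Lemma \ref{lem:global_self_concordant} give $\|\Delta\|_{\op}<\tfrac{1}{2}$, constantly many probes yield $1/\poly(m,n)$ relative accuracy, and each probe $g^{\top}\Delta^{k}g$ reduces to matvecs with $B$, $B^{\top}$, and the preconditioner for $O(\nnz(\ma)+n^{2})$ total.

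The main obstacle is that these per-step approximations must not spoil the mixing analysis of the exact Dikin walk. The plan is the standard TV-perturbation argument: tuning each routine's accuracy to $1/(mn)^{C}$ for a large constant $C$ makes the implemented kernel within $1/(mn)^{C-1}$ total variation of the ideal Dikin kernel, so by a union bound the chains couple across the $O(mn)$-step mixing horizon and the conductance-based bound of \citep{kannan2012random} transfers. A secondary issue is amortized preconditioner upkeep, which is handled by rebuilding the sparse approximation only after a constant-factor drift in $\ms_{x}$ and charging the rebuild cost against the $O(\nnz(\ma)+n^{2})$ per-step budget.
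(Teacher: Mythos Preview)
The central gap is in the determinant-ratio step. Your Hutchinson-plus-Taylor estimator for $\tr\log(I+\Delta)$ is a Monte Carlo quantity: a single probe $g^{\top}\log(I+\Delta)g$ has variance $2\norm{\log(I+\Delta)}_{F}^{2}$, and by Hanson--Wright it concentrates only to within $O\bigl(\norm{\log(I+\Delta)}_{F}\sqrt{\log(1/\delta)}\bigr)$ of the trace with probability $1-\delta$. Even though strong self-concordance (Lemma~\ref{lem:global_strongly_self_concordant}) forces $\norm{\Delta}_{F}=O(r)$ to be a small constant, averaging $k$ probes shrinks the error only like $1/\sqrt{k}$. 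To push the additive error on the acceptance probability down to $1/(mn)^{C}$, as your TV-coupling argument demands, you would need $k=\Theta\bigl((mn)^{2C}\bigr)$ probes per step, which destroys the $O(\nnz(\ma)+n^{2})$ budget. The sentence ``constantly many probes yield $1/\poly(m,n)$ relative accuracy'' is simply false for any randomized trace estimator of this kind; there is no concentration mechanism that gets you polynomial accuracy from $O(1)$ samples.

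The paper's route is not ``approximate accurately and couple'' but ``estimate unbiasedly and keep the chain exact.'' It builds an \emph{unbiased} estimator of the ratio itself via the exponential-series debiasing trick (Lemma~\ref{lem:det}), fed by an unbiased log-determinant estimator obtained from a single Hutchinson probe at a uniformly random interpolation time (Lemma~\ref{lem:logdet}); and it replaces the non-smooth $\min$ filter by the smooth filter $p(y\to x)/(p(x\to y)+p(y\to x))$ so that the randomized acceptance preserves the correct stationary law without any accuracy requirement. The Lee--Sidford inverse-maintenance result (Lemma~\ref{lem:inv}) supplies the $\tilde{O}(\nnz(\ma)+n^{2})$ amortized solver, which is essentially the same primitive you invoke for your preconditioner. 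The conceptual difference is unbiasedness versus high accuracy; to salvage your route you would need a deterministic or geometrically-converging trace estimator, and none is available at this cost. A minor secondary point: even with an $O(1)$-conditioned preconditioner, CG needs $\Theta(\log(mn))$ iterations for $1/\poly(mn)$ residual, so your sampling step already costs $\tilde{O}(\nnz(\ma)+n^{2})$ rather than $O(\nnz(\ma)+n^{2})$.
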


The next lemma results from studying strong self-concordance for classical
barriers. The KLS constant below is conjectured to be $O(1)$ and
known to be $O(n^{\frac{1}{4}})$ \citep{lee2017eldan}.
\begin{lemma}
\label{lem:SSC-KLS}Let $\psi_{n}$ be the KLS constant of isotropic
logconcave densities in $\R^{n}$, namely, for any isotropic logconcave
density $p$ and any set $S\subset\Rn$, we have
\[
\int_{\partial S}p(x)dx\geq\frac{1}{\psi_{n}}\min\left\{ \int_{S}p(x)dx,\int_{\R^{n}\backslash S}p(x)dx\right\} .
\]
Let $\mh(x)$ be the Hessian of the universal or entropic barriers.
Then, we have
\[
\norm{\mathbf{H}(x)^{-1/2}D\mathbf{H}(x)[h]\mathbf{H}(x)^{-1/2}}_{F}=O(\psi_{n})\norm h_{x}.
\]

In short, the universal and entropic barriers in $\R^{n}$ are strongly
self-concordant up to a scaling factor depending on $\psi_{n}$.
\end{lemma}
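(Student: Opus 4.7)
The plan is to recognize that the Hessians of both barriers are built from moments of a log-concave density depending on $x$, reduce the strong self-concordance bound to a third-moment inequality, and then control that inequality by the Poincar\'e (KLS) inequality. For the entropic barrier, the Fenchel-dual identity applied to $f(\theta)=\log\int_{K}e^{-\langle\theta,y\rangle}dy$ gives $\mh(x)=\msigma(x)^{-1}$, where $\msigma(x)$ is the covariance of the exponential-family density on $K$ with mean $x$. For the universal barrier $U(x)=\log\vol((K-x)^{\circ})$, a direct radial-integration computation yields
\[
\mh(x)=(n+1)(n+2)\,\msigma(x)+(n+1)\,m(x)m(x)^{\top},
\]
where $m(x),\msigma(x)$ are the mean and covariance of the uniform density on $(K-x)^{\circ}$. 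In both cases $\mh(x)$ depends smoothly on $x$ through the low-order moments of a log-concave density $p_x$.

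Next, apply the matrix identity $D(\msigma^{-1})[h]=-\msigma^{-1}D\msigma[h]\msigma^{-1}$, which gives
\[
\mh(x)^{-1/2}D\mh(x)[h]\mh(x)^{-1/2}=-\msigma(x)^{-1/2}D\msigma(x)[h]\msigma(x)^{-1/2}
\]
for the entropic barrier, and a similar identity (up to rank-one corrections from the $mm^{\top}$ piece) for the universal barrier. For an exponential family $D\msigma(x)[h]$ equals the third central moment tensor of $p_x$ contracted with $-\msigma(x)^{-1}h$. Performing an affine change of variables that puts $p_x$ in isotropic position, $\norm{h}_{x}$ becomes the Euclidean length of a vector $u$, and the desired Frobenius bound reduces to proving
\[
\bigl\|\E_{q}[zz^{\top}\langle z,u\rangle]\bigr\|_{F}\le O(\psi_n)\,\norm{u}
\]
for every isotropic centered log-concave density $q$ on $\Rn$ and every $u\in\Rn$.

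To prove this third-moment inequality, use the variational identity $\norm{A}_{F}=\sup_{\norm{M}_{F}\le 1}\tr(MA)$. Then
\[
\tr\bigl(M\,\E_q[zz^{\top}\langle z,u\rangle]\bigr)=\E_q\bigl[(z^{\top}Mz)(u^{\top}z)\bigr]=\cov_q(z^{\top}Mz,u^{\top}z),
\]
where the last equality uses $\E_q[u^{\top}z]=0$. Cauchy-Schwarz together with the Poincar\'e inequality $\Var_q(g)\le\psi_n^{2}\E_q\norm{\nabla g}^{2}$ yields $\Var_q(u^{\top}z)=\norm{u}^{2}$ and $\Var_q(z^{\top}Mz)\le 4\psi_n^{2}\tr(M^{2})\le 4\psi_n^{2}$, so $\norm{A}_{F}\le 2\psi_n\norm{u}$.

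The entropic case then follows directly. The main obstacle is the universal barrier: because $(K-x)^{\circ}$ moves with $x$, the derivative $D\msigma(x)[h]$ carries boundary contributions and $\mh(x)$ includes the extra rank-one term $(n+1)m(x)m(x)^{\top}$. The remaining technical work is to show that these additional pieces still reduce to the same class of third-moment expressions controlled in the KLS step, for instance by differentiating the radial parameterization of $(K-x)^{\circ}$ directly, or by lifting to an $(n{+}1)$-dimensional cone on which $\mh$ is a genuine inverse covariance so that Step~3 applies uniformly.
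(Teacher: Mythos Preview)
Your treatment of the entropic barrier is essentially the paper's argument: both identify $\mh(x)=\msigma(x)^{-1}$ via Fenchel duality and reduce the Frobenius bound to $\|\E_q[zz^\top\langle z,u\rangle]\|_F\le O(\psi_n)\|u\|$ for an isotropic log-concave $q$. Your proof of this third-moment inequality---dualizing the Frobenius norm, writing the result as $\cov_q(z^\top Mz,\,u^\top z)$, and applying Cauchy--Schwarz together with the Poincar\'e inequality for $z^\top Mz$---is a clean, self-contained alternative to the paper, which simply cites this bound as a known fact from Eldan's thin-shell work. Since for log-concave measures the Poincar\'e constant and $\psi_n^2$ are equivalent up to universal factors, your argument is sound and arguably more transparent about where $\psi_n$ enters.

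For the universal barrier, however, there is a genuine gap. You correctly flag the obstacle---the domain $(K-x)^\circ$ moves with $x$, and $\mh(x)$ carries the extra rank-one piece $(n+1)m m^\top$---but you do not carry out the computation, and your proposed remedies (radial differentiation, lifting to a cone) are more involved than necessary. The paper avoids all of this by quoting the explicit closed-form expressions for $\nabla^2\Phi(x)$ and $D\nabla^2\Phi(x)[h]$ already worked out in Nesterov--Nemirovski; these are moment formulas over the uniform measure on $(K-x)^\circ$ and already absorb the boundary contributions from the moving polar. After rewriting them in centered form and normalizing so that $\nabla^2\Phi(x)=\mi$, the leading term is precisely the scaled third central moment, bounded by $O(\psi_n)$ via the same KLS step, while every remaining term is bounded by an \emph{absolute constant} using only the consequences $(n+1)\mu\mu^\top\preceq\mi$ and $(n+1)(n+2)\,\E(y-\mu)(y-\mu)^\top\preceq\mi$ of the normalization---no KLS is needed for those pieces. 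The missing idea, then, is simply to use the existing Nesterov--Nemirovski derivative formulas rather than re-derive them, and to observe that the non-third-moment terms are $O(1)$.
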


In fact, our proof( see Section \ref{section5}) shows that up to a logarithmic factor the strong
self-concordance of these barriers is \emph{equivalent }to the KLS
conjecture.

\section{Mixing with Strong Self-Concordance}
A key ingredient of the proof of Theorem \ref{lem:dikingeneral} is
the following lemma.
\begin{lemma}
\label{lem:dikin-one-step}For two points $x,y\in P$, with $\Vert x-y\Vert_{x}\le\frac{1}{512\sqrt{n}}$,
we have $d_{TV}(P_{x},P_{y})\leq\frac{3}{4}$. 
\end{lemma}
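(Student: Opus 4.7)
The plan is to prove the equivalent statement $\int \min(p_x(z),p_y(z))\,dz \ge 1/4$, where $p_u(z) := \mathbf{1}_{z \in E_u(r)}/\max(\vol(E_u(r)),\vol(E_z(r)))$ is the density of the absolutely continuous part of $P_u$ coming from the Metropolis rule $\min\{1, \vol(E_u(r))/\vol(E_z(r))\}$. Since $P_x, P_y$ have rejection atoms at the distinct points $x, y$, those atoms contribute nothing to $\int \min(P_x, P_y)$, and the claim $d_{TV}(P_x,P_y) \le 3/4$ reduces to this integral bound.

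I would establish this via three estimates. First, using Lemma~\ref{lem:global_strongly_self_concordant} with $\|x-y\|_x \le 1/(512\sqrt n)$, I obtain $\|\Delta\|_F \le 1/(256\sqrt n)$ for $\Delta := \mathbf{H}(x)^{-1/2}\mathbf{H}(y)\mathbf{H}(x)^{-1/2} - I$; Cauchy--Schwarz on its eigenvalues gives $|\tr(\Delta)| \le \sqrt n\|\Delta\|_F = 1/256$ and hence $|\log\det(I+\Delta)| \le |\tr(\Delta)| + O(\|\Delta\|_F^2) = O(1)$, so $\vol(E_x(r))/\vol(E_y(r))$ lies within a multiplicative error arbitrarily close to $1$. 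Second, I change coordinates so $\mathbf{H}(x) = I$ and write $z = x + rv$ with $v$ uniform on $B^n$ and $w := (y-x)/r$, $\|w\| \le 1/\sqrt n$. The condition $z \in E_y(r)$ becomes $\|v-w\|^2 + (v-w)^\top \Delta(v-w) \le 1$; a Hanson--Wright-type concentration estimate for quadratic forms on $B^n$ (exploiting $\|\Delta\|_F \le 1/(256\sqrt n)$) shows $(v-w)^\top \Delta(v-w)$ is concentrated near its $O(1/n^{3/2})$ mean, so with overwhelming probability the condition collapses to the pure Euclidean overlap $\|v-w\|^2 \le 1$. Using the asymptotic expansions $\|v\|^2 = 1 - \xi/n$ with $\xi \Rightarrow \operatorname{Exp}(1)$ and $\sqrt n \langle v, w/\|w\|\rangle \Rightarrow Z \sim \mathcal N(0,1)$ independently, a direct computation yields the limit $\Pr[\|v - w\|^2 \le 1] = \Pr[Z + \xi/2 \ge 1/2] \approx 0.49$, and for all $n$ this probability is bounded below by an absolute constant $c_0 > 0.4$.

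Third, I handle the Metropolis acceptance via convexity of $f := \log \det \mathbf{H}$: since $f(z) \ge f(x) + \nabla f(x)^\top(z-x)$, the acceptance $A_x(z) = \min\{1, e^{(f(z)-f(x))/2}\}$ is at least $1/(1+\varepsilon)$ whenever $\nabla f(x)^\top(z-x) \ge -2\varepsilon$. Strong self-concordance combined with Cauchy--Schwarz on the trace gives
\[
|\nabla f(x)^\top h| = |\tr(\mathbf{H}(x)^{-1/2}D\mathbf{H}(x)[h]\mathbf{H}(x)^{-1/2})| \le \sqrt n \cdot 2\|h\|_x,
\]
so $\|\nabla f(x)\|_{\mathbf{H}(x)^{-1}} \le 2\sqrt n$, and under $\mu_x := \mathrm{Unif}(E_x(r))$ the scalar $\nabla f(x)^\top(z-x)$ has variance at most $4r^2 n/(n+2) \le 4/512^2$. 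Chebyshev then forces $A_x(z) \ge 1/(1+\varepsilon)$ outside a $\mu_x$-set of measure $O(r^2/\varepsilon^2)$, and the same holds for $A_y(z)$ because $|f(x) - f(y)| = O(1/n)$ by the first estimate.

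Combining: on the intersection of the overlap event from the second estimate and the high-acceptance event from the third---of probability at least $c_0 - O(r^2/\varepsilon^2) - o(1)$ under $\mu_x$---the integrand $\min(p_x, p_y)(z) = 1/\max(\vol(E_x), \vol(E_y), \vol(E_z))$ is at least $1/((1+\varepsilon)\max(\vol(E_x),\vol(E_y)))$. Integrating and folding in the first estimate gives $\int \min(p_x, p_y)\,dz \ge (c_0 - o(1))/((1+\varepsilon)(1+\varepsilon'))$, which exceeds $1/4$ for $\varepsilon$ small since $c_0 > 0.4$. The main obstacle is the second estimate: because the centers are separated by $\Theta(r/\sqrt n)$ in Mahalanobis distance, the overlap of the two ellipsoids is only a positive constant rather than close to $1$, so the bound $3/4$ cannot be improved to something vanishing with $\|x-y\|_x$; the constant $1/512$ in the hypothesis is tuned precisely so that the $O(1/512^2)$ overheads from the first and third estimates leave comfortable room below this overlap constant.
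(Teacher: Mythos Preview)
Your approach is sound and relies on exactly the same three ingredients as the paper: (i) strong self-concordance gives the gradient bound $\|\nabla\log\det\mathbf{H}(x)\|_{\mathbf{H}(x)^{-1}}\le 2\sqrt{n}$, (ii) convexity of $\log\det\mathbf{H}$ turns this into a lower bound on the acceptance ratio, and (iii) Lemma~\ref{lem:global_strongly_self_concordant} controls the shape change via $\|\Delta\|_F$. The difference is purely in packaging. The paper uses the decomposition $d_{TV}(P_x,P_y)\le \tfrac12\mathrm{rej}_x+\tfrac12\mathrm{rej}_y+d_{TV}(\mathcal{E}(x,\mathbf{H}(x)),\mathcal{E}(y,\mathbf{H}(y)))$ and then splits the proposal TV by a triangle inequality into a \emph{center-shift} term (two balls of radius $r$ at distance $r/\sqrt{n}$, handled by a cited lemma giving $\le e/(e+1)$) and a \emph{shape-change} term (same center, handled by an explicit eigenvalue-product argument). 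You instead bound $\int\min(p_x,p_y)$ directly, treating both effects at once. The paper's split is cleaner precisely because it avoids your quadratic-form concentration step: the shape-change term is bounded by constructing an ellipsoid inside both and computing $\prod_{\lambda_i<1}\lambda_i^{1/2}$, which needs only $\sum_i|\lambda_i-1|\le\sqrt{n}\|\Delta\|_F\le 1/256$.

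A few loose ends in your write-up, none fatal. First, $|f(x)-f(y)|=|\log\det(I+\Delta)|$ is bounded by $|\tr\Delta|+O(\|\Delta\|_F^2)\le 1/256 + O(1/n)$, which is $O(1)$, not $O(1/n)$ as you state; the constant $1/256$ is still small enough, and in fact once you set up the integrand as $1/\max(\vol E_x,\vol E_y,\vol E_z)$ you do not need a separate bound on $A_y(z)$ at all. Second, the mean of $(v-w)^\top\Delta(v-w)$ is $\tr(\Delta)/(n{+}2)+O(1/n^{3/2})=O(1/n)$, not $O(1/n^{3/2})$; again harmless. Third, your overlap calculation is asymptotic, and the limiting law of $n(1-\|v\|^2)$ is exponential with mean $2$, not mean $1$ (so the true limit is ${\approx}\,0.62$, better than your $0.49$); to make the bound hold for every $n$ you would either need a monotonicity argument or, as the paper does, invoke a non-asymptotic lemma for the overlap of two balls at distance $r/\sqrt{n}$.
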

\begin{proof}  Let $\mathcal{E}(x,\mathbf{A})$ denote the uniform distribution
over an ellipsoid centered at $x$ with covariance matrix $\mathbf{A}$ and radius $r=\frac{1}{512}$. Then,
\begin{align}
d_{\mathrm{TV}}(P_{x},P_{y}) & \leq\frac{1}{2}\text{rej}_{x}+\frac{1}{2}\text{rej}_{y}+d_{\textrm{TV}}(\mathcal{E}(x,\mathbf{H}(x)),\mathcal{E}(y,\mathbf{H}(y))\label{eq:def}
\end{align}
where $\text{rej}_{x}$ and $\text{rej}_{y}$ are the rejection probabilities
at $x$ and $y$.
We break the proof into 2 parts. First we bound the rejection probability at $x$. Consider the algorithm picks a point
$z$ from $E_{x}(r)$. Let $f(z)=\ln\det\mathbf{H}(z)$. The acceptance
probability of the sample $z$ is
\begin{equation}
\min\left\{ 1,\frac{\vol(E_{x}(r))}{\vol(E_{z}(r))}\right\} =\min\left\{ 1,\sqrt{\frac{\det(\mathbf{H}(z))}{\det(\mathbf{H}(x))}}\right\} .\label{eq:reject_prob_z}
\end{equation}
By our assumption $f$ is a convex function, and hence
\begin{equation}
\ln\frac{\det(\mathbf{H}(z))}{\det(\mathbf{H}(x))}=f(z)-f(x)\geq\langle\nabla f(x),z-x\rangle.\label{eq:log_det_Hzx}
\end{equation}
\begin{equation}
\langle\nabla f(x),z-x\rangle = \langle\H(x)^{-\frac{1}{2}}\nabla f(x),\H(x)^{-\frac{1}{2}}(z-x)\rangle
\end{equation}
where $z' = \H(x)^{-\frac{1}{2}}z$ is sampled from a ball of radius $r$ centered at $x' = \H(x)^{-\frac{1}{2}}x$, and hence we know that 
\[
\Pr(v^{\top}(z'-x')\geq-\epsilon r\|v\|_{2})\geq1-e^{-n\epsilon^{2}/2}.
\]
In particular, with probability at least $0.99$ in $z$, we have
\begin{equation}
\langle\nabla f(x),z-x\rangle\geq-\frac{4r}{\sqrt{n}}\|\H(x)^{-\frac{1}{2}}\nabla f(x)\|_{2}.\label{eq:grad_prob}
\end{equation}
To compute $\|\H(x)^{-\frac{1}{2}} \nabla f(x)\|_{2}^{2}$, it is easier to compute directional
derivative of $\nabla f$. Note that 
%\begin{align}
%\|\nabla f(x)\|_{2} & =\max_{\|v\|_{2}=1}\nabla %f(x)^{\top}v\nonumber \\
 %& =\max_{\|v\|_{2}=1}\tr(\mh(x)^{-1}D\mh(x)[v])\nonumber \\
 %& =\max_{\|v\|_{2}=1}\tr\left(\mh(x)^{-\frac{1}{2}}D\mh(x)[v]\mh(x)^{-\frac{1}{2}}\right)\nonumber \\
 %& \leq\max_{\|v\|_{2}=1}\sqrt{n}\|\mh(x)^{-\frac{1}{2}}D\mh(x)[v]\mh(x)^{-\frac{1}{2}}\|_{F}\nonumber \\
 %& \leq2\sqrt{n}\label{eq:grad_log_H}
%\end{align}
\begin{align}
\|\H(x)^{-\frac{1}{2}}\nabla f(x)\|_{2} & = \max_{\|v\|_{2}=1}\left(\H(x)^{-\frac{1}{2}}\nabla f(x)\right)^{\top}v \nonumber\\
 & =\max_{\|v\|_{2}=1}\tr(\mh(x)^{-1}D\mh(x)[\H(x)^{-\frac{1}{2}}v]) \nonumber\\
 & =\max_{\|u\|_{x}=1}\tr\left(\mh(x)^{-\frac{1}{2}}D\mh(x)[u]\mh(x)^{-\frac{1}{2}}\right) \nonumber\\
 & \leq\max_{\|u\|_{x}=1}\sqrt{n}\|\mh(x)^{-\frac{1}{2}}D\mh(x)[u]\mh(x)^{-\frac{1}{2}}\|_{F} \nonumber\\
 & \leq\max_{\|u\|_{x}=1}2\sqrt{n}\|u\|_{x} \leq2\sqrt{n}\label{eq:grad_log_H}
\end{align}
where the first inequality follows from $\left|\sum_{i=1}^{n}\lambda_{i}\right|\leq\sqrt{n}\sqrt{\sum_{i=1}^{n}\lambda_{i}^{2}}$
and the second inequality follows from the definition of strong self-concordance.

Combining \eqref{eq:reject_prob_z}, \eqref{eq:log_det_Hzx}, \eqref{eq:grad_prob}
and \eqref{eq:grad_log_H}, we see that with probability at least
$0.99$ in $z$, the acceptance probability of the sample $z$ is 
\begin{equation}
\min\left\{ 1,\frac{\vol(E_{x}(r))}{\vol(E_{z}(r))}\right\} \geq e^{-4r}\geq0.9922\label{eq:rejection}
\end{equation}
where we used that $r=\frac{1}{512}$. Hence, the rejection
probability $\mathrm{rej}_{x}$ (and similarly $\mathrm{rej}_{y}$)
satisfies
\begin{equation}
\mathrm{rej}_{x}\leq0.0039\qquad\text{and}\qquad\mathrm{rej}_{y}\leq0.0039.\label{eq:rej_xy}
\end{equation}

To bound the second term, note that $d_{\mathrm{TV}}$ follows the triangle inequality. So, we can bound the second term in \eqref{eq:def} as
\begin{align}
d_{\textrm{TV}}(\mathcal{E}(x,\mathbf{H}(x)),\mathcal{E}(y,\mathbf{H}(y)))
 & \leq d_{\textrm{TV}}(\mathcal{E}(x,\mathbf{H}(x)),\mathcal{E}(y,\mathbf{H}(x)))\\
 &\phantom{{}=1} +d_{\textrm{TV}}(\mathcal{E}(y,\mathbf{H}(x)),\mathcal{E}(y,\mathbf{H}(y)))\notag \label{eq:three_terms}
\end{align}
By definition of $d_{\mathrm{TV}}$,
\begin{equation}
    d_{\textrm{TV}}(\mathcal{E}(x,(\mathbf{H}(x)),\mathcal{E}(y,\mathbf{H}(y))) =\frac{1}{2}\frac{\vol(E_{x}\backslash E_{y})}{\vol(E_{x})}+\frac{1}{2}\frac{\vol(E_{y}\backslash E_{x})}{\vol(E_{y})}
\end{equation}
The first term is a ratio of volumes and hence is invariant under the transformation $z \rightarrow \mathbf{H}(x)^{1/2}z$, after which it becomes the total variation distance between 2 balls of radius $r$ whose centers are at a distance at most $\frac{r}{\sqrt{n}}$. To bound this, we use lemma 3.2 from \citep{kannan1997random},
\begin{equation}
d_{\textrm{TV}}(\mathcal{E}(x,\mathbf{H}(x)),\mathcal{E}(y,\mathbf{H}(x))\leq\frac{e}{e+1}\label{eq:vol_distance}
\end{equation}

Now, we bound $d_{\textrm{TV}}(\mathcal{E}(y,\mathbf{H}(x)),\mathcal{E}(y,\mathbf{H}(y)))$.
Let $Y_{x}=\{z:(z-y)^{\top}\mathbf{H}(x)(z-y)\leq r^{2}\}$ and $Y_{y}=\{z:(z-y)^{\top}\mathbf{H}(y)(z-y)\leq r^{2}\}$.
Then,

\begin{align}
d_{\textrm{TV}}(\mathcal{E}(y,(\mathbf{H}(x)),\mathcal{E}(y,\mathbf{H}(y))) & =\frac{1}{2}\frac{\vol(Y_{x}\backslash Y_{y})}{\vol(_{x})}+\frac{1}{2}\frac{\vol(Y_{y}\backslash Y_{x})}{\vol(Y_{y})}\\
      =1&-\frac{1}{2}\frac{\vol(Y_{x}\cap Y_{y})}{\vol(Y_{x})}-\frac{1}{2}\frac{\vol(Y_{x}\cap Y_{y})}{\vol(Y_{y})}\label{eq:dikin_formula} 
 \end{align}
We bound the total variation distance by bounding the fraction of
volume in the intersection of the ellipsoids having the same center. Again, we
can assume that $\mh(y)=\mi$ and that $y = 0$. Then, strong self-concordance
and Lemma \ref{lem:global_strongly_self_concordant} show that
\begin{equation}
\|\mi-\mh(x)^{-1}\|_{F}\leq2\|x-y\|_{x}\leq\frac{1}{256\sqrt{n}}.\label{eq:H_minus_I_F}
\end{equation}
In particular, we have that 
\begin{equation}
\frac{255}{256}\mi\preceq\mh(x)^{-1}\preceq\frac{257}{256}\mi.\label{eq:H_I_op}
\end{equation}

We partition the inverse eigenvalues, $\{\lambda_{i}\}_{i\in [n]}$ of $\mh(x)$ into those
with values at least $1$ and the rest. Then consider the ellipsoid $\mathcal{I}$
whose inverse eigenvalues are $\min\left\{ 1,\lambda_{i}\right\}$ along the eigenvectors of $\mh(x)$. This is contained in both $Y_{x}$ and $Y_{y}$. We will see that
$\vol(\mathcal{I})$ is a constant fraction of the volume of both $Y_{x}$
and $Y_{y}$. First, we compare $\mathcal{I}$ and $Y_{y}$.
\begin{equation}
\begin{split}
    \frac{\vol(Y_x \cap Y_y)}{\vol(Y_{y})} &\geq \frac{\vol(\mathcal{I})}{\vol(Y_{y})}=\left(\prod_{i:\lambda_{i}<1}\lambda_{i}\right)^{1/2}\\
    &=\left(\prod_{i:\lambda_{i}<1}\left(1-(1-\lambda_{i})\right)\right)^{1/2}\\
    &\geq\exp\left(-\sum_{i:\lambda_{i}<1}\left(1-\lambda_{i}\right)\right)\label{eq:vol_E_x}
\end{split}
\end{equation}
where we used that $1-x\geq\exp(-2x)$ for $0\leq x\leq\frac{1}{2}$
and $\lambda_{i}\geq\frac{1}{2}$ \eqref{eq:H_I_op}. From the inequality
\eqref{eq:H_minus_I_F}, it follows that
\[
\sqrt{\sum_{i}(\lambda_{i}-1)^{2}}\le\frac{1}{256\sqrt{n}}.
\]
Therefore, $\sum_{i:\lambda_{i}<1}|\lambda_{i}-1|\leq\frac{1}{256}$.
Putting it into \eqref{eq:vol_E_x}, we have
\begin{equation}
\frac{\vol(Y_{x}\cap Y_{y})}{\vol(Y_{y})}=\frac{\vol(\mathcal{I})}{\vol(Y_{y})}\geq e^{-\frac{1}{256}}.\label{eq:volPxyPx}
\end{equation}
Similarly, we have
\begin{equation}
\begin{split}
\frac{\vol(Y_{x}\cap Y_{y})}{\vol(Y_{x})}&\geq \left(\frac{\prod_{i:\lambda_{i}<1}\lambda_{i}}{\prod_{i:\lambda_{i}}\lambda_{i}}\right)^{1/2}=\left(\frac{1}{\prod_{i:\lambda_{i}>1}\lambda_{i}}\right)^{1/2}\\
 &\geq \left(\frac{1}{\exp(\sum_{i:\lambda_{i}>1}(\lambda_{i}-1))}\right)^{1/2}\geq e^{-\frac{1}{512}}.\label{eq:volPxyPy}
\end{split}
\end{equation}

Putting \eqref{eq:volPxyPx} and \eqref{eq:volPxyPy} into \eqref{eq:dikin_formula},
we have

\begin{equation}
d_{\textrm{TV}}(\mathcal{E}(y,\mathbf{H}(x)),\mathcal{E}(y,\mathbf{H}(y))\leq1-\frac{e^{-\frac{1}{256}}}{2}-\frac{e^{-\frac{1}{512}}}{2}\label{eq:ellipsoid_2}
\end{equation}

Putting \eqref{eq:rej_xy}, \eqref{eq:vol_distance} and \eqref{eq:ellipsoid_2}
into \eqref{eq:def}, we have
\[
d_{\mathrm{TV}}(P_{x},P_{y})\leq\frac{0.0039}{2}+\frac{0.0039}{2}+1-\frac{e^{-\frac{1}{256}}}{2}-\frac{e^{-\frac{1}{512}}}{2}+\dfrac{e}{e+1}\leq\frac{3}{4}
\]
\end{proof}

The next lemma establishes isoperimetry and only needs the symmetric
containment assumption. This isoperimetry is for the cross-ratio distance.
For a convex body $K$, and any two points $x,y\in K$, suppose that
$p,q$ are the endpoints of the chord through $x,y$ in $K$, so that
these points occur in the order $p,x,y,q.$ Then, the \emph{cross-ratio}
distance between $x$ and $y$ is defined as 
\[
d_{K}(x,y)=\frac{\|x-y\|_{2}\|p-q\|_{2}}{\|p-x\|_{2}\|y-q\|_{2}}.
\]
This distance enjoys the following isoperimetric inequality.
\begin{theorem}[\citep{lovasz1999hit}]
\label{thm:dK-iso}For any convex body $K$, and disjoint subsets
$S_{1},S_{2}$ of it, and $S_{3}=K\setminus S_{1}\setminus S_{2},$we
have
\[
\vol(S_{3})\ge d_{K}(S_{1},S_{2})\frac{\vol(S_{1})\vol(S_{2})}{\vol(K)}.
\]
\end{theorem}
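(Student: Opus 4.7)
My plan is to prove the theorem via the localization method of Lovász and Simonovits. Set $t := d_K(S_1, S_2) = \inf_{x \in S_1, y \in S_2} d_K(x, y)$; it suffices to establish
\[
\vol(S_3) \cdot \vol(K) \;\geq\; t \cdot \vol(S_1) \cdot \vol(S_2).
\]
Rewriting as an integral inequality $\int_K \mathbf{1}_{S_3} \cdot \int_K 1 \geq t \int_K \mathbf{1}_{S_1} \cdot \int_K \mathbf{1}_{S_2}$ and assuming for contradiction that it fails, the Lovász--Simonovits localization lemma produces a one-dimensional ``needle''---a segment $[p, q] \subseteq K$ equipped with a log-concave weight $w(s) = \ell(s)^{n-1}$ (with $\ell$ affine, arising from Lebesgue measure on the bundle of orthogonal slices)---on which the analogous inequality fails for the trace sets $A_i := S_i \cap [p, q]$.

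I would then prove the 1-D inequality directly. Because shrinking the endpoints of a chord only increases the cross-ratio distance between interior points (a direct check on the formula $\frac{(y-x)(b-a)}{(x-a)(b-y)}$), we have $d_{[p,q]}(A_1, A_2) \geq d_K(A_1, A_2) \geq t$, so a contradiction will follow once we show
\[
\int_{A_3} w \cdot \int_p^q w \;\geq\; d_{[p,q]}(A_1, A_2) \cdot \int_{A_1} w \cdot \int_{A_2} w
\]
for every log-concave $w \geq 0$ on $[p, q]$. Without loss of generality $A_1$ lies entirely to the left of $A_2$; putting $u = \sup A_1$ and $v = \inf A_2$ gives $[u, v] \subseteq A_3$ and $d_{[p,q]}(A_1, A_2) = \frac{(v-u)(q-p)}{(u-p)(q-v)}$, so the claim reduces to the four-integral inequality
\[
\int_u^v w \cdot \int_p^q w \;\geq\; \frac{(v-u)(q-p)}{(u-p)(q-v)} \int_p^u w \cdot \int_v^q w.
\]

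To finish, I would apply the projective substitution $\tau = (s-p)/(q-s)$, which sends $[p, q]$ onto $[0, \infty)$ with log-concave Jacobian $(q-p)/(1+\tau)^2$ (so the pushforward weight remains log-concave) and converts the cross-ratio factor $\frac{(v-u)(q-p)}{(u-p)(q-v)}$ into the multiplicative ratio $(\tau_v - \tau_u)/\tau_u$. After this change of variable the inequality becomes a standard four-function inequality for log-concave densities on the half-line, provable either by a bang-bang/extremality argument showing the worst case is an exponential weight---where every integral can be computed in closed form---or by a Prékopa--Leindler-type comparison. The main obstacle is precisely this last step: the four-function inequality for log-concave weights is technically delicate, and the extremality argument must be tracked carefully through the projective substitution. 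The first two steps (localization and the reduction to the four-integral form by endpoint optimization) are largely mechanical once the framework is set up.
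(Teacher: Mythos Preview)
The paper does not prove this theorem: it is quoted from Lov\'asz's hit-and-run paper and invoked as a black box in the conductance bound. There is therefore no in-paper proof to compare against. Your localization outline is exactly the strategy of the original reference, so at the level of approach you are reproducing the intended argument rather than offering an alternative.

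One step of your outline needs more care, however. After restricting to a needle $[p,q]$, you write ``without loss of generality $A_1$ lies entirely to the left of $A_2$'' and then set $u=\sup A_1$, $v=\inf A_2$. This is not automatic: the trace sets $A_i = S_i\cap[p,q]$ produced by the four-function localization lemma can be \emph{interlaced} while still satisfying $d_{[p,q]}(A_1,A_2)\ge t$. For instance, on $[0,1]$ take $A_1$ concentrated near $\{0.1,\,0.5\}$ and $A_2$ near $\{0.3,\,0.8\}$; a direct check shows every pairwise cross-ratio distance exceeds $1.3$, yet $\sup A_1=0.5>0.3=\inf A_2$, so your $u,v$ are ill-defined and the reduction to the four-integral inequality on $[p,u],[u,v],[v,q]$ collapses. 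Lov\'asz's argument deals with this either by proving the one-dimensional product inequality for \emph{arbitrary} measurable $A_1,A_2$, or by a further reduction on the needle that forces the interval structure; your projective substitution and the interval-endpoint inequality are correct and useful, but the interlacing issue belongs to the ``technically delicate'' part of the proof and cannot be dismissed with a WLOG.
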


We now relate the cross-ratio distance to the ellipsoidal norm.
\begin{lemma}
\label{lem:dikin-dist}For any $x,y\in K$, $d_{K}(x,y)\ge\frac{\Vert x-y\Vert_{x}}{\sqrt{\bar{\nu}}}.$
\end{lemma}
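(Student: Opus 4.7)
The plan is to work along the chord through $x$ and $y$ and exploit $\bar{\nu}$-symmetry of the Dikin ellipsoid directly; no derivative or self-concordance estimate is needed, only the containment $K\cap(2x-K)\subseteq E_x(\sqrt{\bar\nu})$.

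First I would set up notation on the chord. Let $p,x,y,q$ be collinear in this order with $p,q\in\partial K$, let $v=(y-x)/\|y-x\|_2$, and set $s_+ = \|q-x\|_2$ and $s_- = \|p-x\|_2$. The key geometric observation is that along this line, a point $x+tv$ lies in $K\cap(2x-K)$ iff both $x+tv\in K$ and $x-tv\in K$, which holds iff $|t|\le\min(s_+,s_-)$. So the chord of $K\cap(2x-K)$ through $x$ in direction $v$ has half-length exactly $\min(s_+,s_-)$.

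Next I would apply $\bar\nu$-symmetry. Since the point $x+\min(s_+,s_-)\cdot v$ lies in $K\cap(2x-K)\subseteq E_x(\sqrt{\bar\nu})$, we get $\min(s_+,s_-)^2\,v^\top\mh(x)v\le\bar\nu$, i.e.
\[
\min(s_+,s_-)\;\le\;\frac{\sqrt{\bar\nu}}{\|v\|_x}.
\]

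Then I would rewrite the cross-ratio distance in a convenient form. Using $\|p-q\|_2=\|p-x\|_2+\|x-y\|_2+\|y-q\|_2\ge \|p-x\|_2+\|y-q\|_2$, it follows that
\[
d_K(x,y)\;\ge\;\|x-y\|_2\!\left(\frac{1}{\|p-x\|_2}+\frac{1}{\|y-q\|_2}\right)\;\ge\;\frac{\|x-y\|_2}{\min(\|p-x\|_2,\|y-q\|_2)}.
\]
Since $y$ lies between $x$ and $q$, $\|y-q\|_2\le \|x-q\|_2=s_+$, and trivially $\|p-x\|_2=s_-$, so $\min(\|p-x\|_2,\|y-q\|_2)\le \min(s_+,s_-)\le \sqrt{\bar\nu}/\|v\|_x$. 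Substituting gives
\[
d_K(x,y)\;\ge\;\frac{\|x-y\|_2\,\|v\|_x}{\sqrt{\bar\nu}}\;=\;\frac{\|x-y\|_x}{\sqrt{\bar\nu}},
\]
since $\|x-y\|_x = \|x-y\|_2\cdot \|v\|_x$ by the definition of $v$.

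There is essentially no hard step: the only thing to watch is that one must bound $\|y-q\|_2$ rather than $\|x-q\|_2$ (the natural quantity for the symmetric containment), and the ordering $p,x,y,q$ on the chord immediately furnishes the inequality $\|y-q\|_2\le s_+$ that bridges the two.
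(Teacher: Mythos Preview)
Your proof is correct and follows essentially the same approach as the paper: both use the containment $K\cap(2x-K)\subseteq E_x(\sqrt{\bar\nu})$ applied to a point on the chord at distance $\min(s_+,s_-)$ from $x$, and then bound the cross-ratio from below by $\|x-y\|_2$ divided by this minimum. Your version is arguably tidier in that it handles both orderings uniformly via the $\min$, whereas the paper only treats the case $\|p-x\|_2\le\|y-q\|_2$ explicitly.
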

\begin{proof}
Consider the Dikin ellipsoid at $x$. For the chord $[p,q]$ induced by
$x,y$ with these points in the order $p,x,y,q$, suppose that $\|p-x\|_{2}\le\|y-q\|_{2}$.
Then by Lemma \ref{lem:dikingeneral}, $p\in K\cap(2x-K)$. And hence
$\|p-x\|_{x}\le\sqrt{\bar{\nu}}.$ Therefore, 
\begin{align*}
d_{K}(x,y)=\frac{\|x-y\|_{2}\|p-q\|_{2}}{\|p-x\|_{2}\|y-q\|_{2}}&\geq\frac{\|x-y\|_{2}}{\|p-x\|_{2}}\\
&=\frac{\|x-y\|_{x}}{\|p-x\|_{x}}\ge\frac{\|x-y\|_{x}}{\sqrt{\bar{\nu}}}.
\end{align*}
\end{proof}
We can now prove the main conductance bound.
\dikingeneral*
\begin{proof}
We follow the standard high-level outline \citep{vempala2005geometric}.
Consider any measurable subset $S_{1}\subseteq K$ and let $S_{2}=K\setminus S_{1}$
be its complement. Define the points with low escape probability for
these subsets as
\[
S_{i}'=\left\{ x\in S_{i}:\,P_{x}(K\setminus S_{i})<\frac{1}{8}\right\} 
\]
and $S_{3}'=K\setminus S_{1}'\setminus S_{2}'$. Then, for any $u\in S_{1}'$,
$v\in S_{2}'$, we have $d_{TV}(P_{u},P_{v})>1-\frac{1}{4}$. Hence,
by Lemma \ref{lem:dikin-one-step}, we have $\Vert u-v\Vert_{u}\ge\frac{1}{512\sqrt{n}}$.
Therefore, by Lemma \ref{lem:dikin-dist}, 
\[
d_{K}(u,v)\ge\frac{1}{512\sqrt{n}\cdot\sqrt{\bar{\nu}}}.
\]
We can now bound the conductance of $S_{1}$. We may assume that $\vol(S_{i}')\ge\vol(S_{i})/2$;
otherwise, it immediately follows that the conductance of $S_{1}$
is $\Omega(1)$. Assuming this, we have
\begin{align*}
\int_{S_{1}}P_{x}(S_{2})\,dx & \ge\int_{S_{3}'}\frac{1}{8}dx\ge\frac{1}{8}\vol(S_{3}')\\
& \ge\frac{1}{8}d_{K}(S_{1}',S_{2}')\frac{\vol(S_{1}')\vol(S_{2}')}{\vol(P)}\tag*{(from Thm \ref{thm:dK-iso})} \\
 & \ge\frac{1}{32768\sqrt{n\bar{\nu}}}\min\left\{ \vol(S_{1}),\vol(S_{2})\right\} .
\end{align*}
\end{proof}

It is well-known that inverse squared conductance of a Markov Chain is a bound on its
mixing rate, e.g., in the following form.
\begin{theorem}
\citep{lovasz1993random} Let $Q_{t}$ be the distribution of the current
point after $t$ steps of a Markov chain with stationary distribution
$Q$ and conductance at least $\phi$, starting from initial distribution
$Q_{0}$. Then, with $M=\sup_{A}\frac{Q_{0}(A)}{Q(A)}$,

\[
d_{TV}(Q_{t},Q)\leq\sqrt{M}\left(1-\frac{\phi^{2}}{2}\right)^{t}
\]
where $d_{TV}(Q_{t},Q)$ is the total variation distance between $Q_{t}$
and $Q$.
\end{theorem}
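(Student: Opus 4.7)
The plan is to reproduce the standard argument of Lovász--Simonovits (building on Jerrum--Sinclair), which reduces mixing to a spectral gap estimate obtained from conductance via Cheeger's inequality, and then converts the $L^2$ contraction to a total-variation bound.

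First I would set up potentials. Let $h_t = dQ_t/dQ$ be the Radon--Nikodym derivative (the chain is implicitly reversible since $Q$ is stationary under a Metropolis filter), and track the $\chi^2$--divergence $\Phi_t \defeq \int (h_t-1)^2\, dQ$. Two easy inequalities bracket the argument: on one side, Cauchy--Schwarz gives $d_{TV}(Q_t,Q) = \tfrac{1}{2}\int |h_t-1|\, dQ \le \tfrac{1}{2}\sqrt{\Phi_t}$; on the other side, $\Phi_0 = \int h_0^2\, dQ - 1 \le \|h_0\|_\infty - 1 \le M$. So it suffices to prove $\Phi_t \le M(1-\phi^2/2)^{2t}$, which will follow from a one-step contraction $\Phi_{t+1} \le (1-\phi^2/2)^2 \Phi_t$.

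The one-step contraction is equivalent to a spectral gap lower bound $1-\lambda_2(P) \ge \phi^2/2$ for the Markov operator $P$, applied to the mean-zero function $f = h_t - 1$, since then $\Phi_{t+1} = \|Pf\|_{L^2(Q)}^2 \le (1-\phi^2/2)^2 \|f\|_{L^2(Q)}^2 = (1-\phi^2/2)^2\Phi_t$. To prove the gap from conductance, I would invoke the discrete Cheeger inequality: for any $f$ with $\int f\, dQ = 0$, expand the Dirichlet form $\mathcal{E}(f,f) = \tfrac12 \int\!\int (f(x)-f(y))^2\, dQ(x) P_x(dy)$, and compare it to $\|f\|_{L^2(Q)}^2$ by the usual ``co-area'' trick: apply the conductance inequality $\int_A P_x(A^c) dQ(x) \ge \phi\, \min(Q(A),Q(A^c))$ to the super-level sets $A_s = \{g > s\}$ of $g = f^2\cdot\mathrm{sign}(f)$, integrate over $s$, and then apply Cauchy--Schwarz to reassemble an $L^2$ bound; this yields $\mathcal{E}(f,f) \ge (\phi^2/2)\|f\|_{L^2(Q)}^2$.

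The main obstacle, and the only delicate step, is the squaring in Cheeger's inequality that produces $\phi^2/2$ rather than $\phi$. A direct application of conductance to level sets of $f$ only controls an $L^1$--type quantity; extracting the $L^2$ gap requires the trick of applying conductance to $g = f^2\cdot\mathrm{sign}(f)$ so that the difference $g(x)-g(y) = (f(x)-f(y))(f(x)+f(y))$ can be split by Cauchy--Schwarz, one factor feeding the Dirichlet form and the other feeding $\|f\|_{L^2(Q)}$. Everything else -- initial comparison via $M$, $L^2$-to-$L^1$ conversion, and iteration of the contraction -- is routine.
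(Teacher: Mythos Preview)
The paper does not prove this theorem at all: it is simply quoted as a black box from \citep{lovasz1993random}, so there is no ``paper's own proof'' to compare against. Your proposal is therefore not reproducing anything in this manuscript but rather sketching a proof of the cited result itself.

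As for the sketch: the $\chi^2$/spectral-gap route you outline is a standard and valid argument, but it establishes a slightly different theorem than the one stated. Two hypotheses are silently imported. First, you need reversibility so that the density $h_t$ evolves by the self-adjoint operator $P$ on $L^2(Q)$; you acknowledge this parenthetically, but the theorem as quoted makes no such assumption. Second, and more seriously, Cheeger's inequality bounds only $1-\lambda_2$; to get the contraction $\|Pf\|_{L^2}\le (1-\phi^2/2)\|f\|_{L^2}$ you also need control on the bottom of the spectrum, i.e.\ $\lambda_{\min}(P)\ge 0$, which typically comes from laziness. Without it a reversible chain can have eigenvalues near $-1$ and your one-step contraction fails. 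So your argument proves the result for lazy reversible chains, which suffices for the Dikin walk in this paper, but not the theorem exactly as stated.

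The original Lov\'asz--Simonovits proof takes a genuinely different route that avoids both issues: instead of working in $L^2(Q)$ and invoking a spectral gap, it tracks the concave ``conductance profile'' function $g_t(s)=\sup\{Q_t(A)-Q(A):Q(A)=s\}$ on $[0,1]$ and shows directly, using only the conductance lower bound, that $g_{t+1}$ lies below a piecewise-linear majorant of $g_t$ shrunk by the factor $(1-\phi^2/2)$. This argument needs neither reversibility nor positivity of the operator, which is why the theorem can be stated in the generality it is.
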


\section{Fast Polytope Sampling with the LS barrier}

\subsection{LS Barrier and its Properties\label{subsec:LS-Barrier}}

In this section, we assume the convex set is a polytope $P=\{x\in\mathbb{R}^{n}\vert\mathbf{A}x>b\}$. For any $x\in\mathrm{int}P$, let $\mathbf{S}_{x}=\bdiag(\mathbf{A}x-b)$
and $\mathbf{A}_{x}=\mathbf{S}_{x}^{-1}\mathbf{A}$. We state the definition of the
Lee-Sidford barrier \citep{lee2019solving}, henceforth referred to as LS barrier.
\begin{definition}[LS Barrier]
 The LS barrier is defined as
\[
\psi(x)=\max_{w\in\mathbb{R}^{m}:w\geq0}\frac{1}{2}f(x,w)\]
where \[f(x,w)=\ln\det\left(\mathbf{A}_{x}\mathbf{W}^{1-\frac{2}{q}}\mathbf{A}_{x}\right)-\left(\frac{1}{2}-\frac{1}{q}\right)\sum_{i=1}^{m}w_{i}
\] and $\mathbf{W} = \mathbf{Diag}(w)$, and $q=2(1+\ln m)$.
\end{definition}
We follow the notation in \citep{lee2019solving}:
\begin{definition}
For any $x\in P$, we define $w_{x}=\arg\max_{w\geq0}f(x,w)$, $\mathbf{W}_{x}=\bdiag(w_{x})$,
$s_{x}=\mathbf{A}x-b$, $\mathbf{S}_{x}=\bdiag(s_{x})$, $\mathbf{A}_{x}=\mathbf{S}_{x}^{-1}\mathbf{A}$,
$\mathbf{P}_{x}=\mathbf{W}_{x}^{\frac{1}{2}-\frac{1}{q}}\mathbf{A}_{x}\left(\mathbf{A}_{x}\mathbf{W}_{x}^{1-\frac{2}{q}}\mathbf{A}_{x}\right)^{-1}(\mathbf{W}_{x}^{\frac{1}{2}-\frac{1}{q}}\mathbf{A}_{x})^{\top}$,
$\sigma_{x}=\diag(\mathbf{P}_{x})$, $\mathbf{\Sigma}_{x}=\bdiag(\sigma_{x})$,
$\mathbf{P}_{x}^{(2)}=\mathbf{P}_{x}\circ\mathbf{P}_{x}$, $\mathbf{\Lambda}_{x}=\mathbf{\Sigma}_{x}-\mathbf{P}_{x}^{(2)}$,
$\bar{\mathbf{\Lambda}}_{x}=\mathbf{\Sigma}_{x}^{-1/2}\mathbf{\Lambda}_{x}\mathbf{\Sigma}_{x}^{-1/2}$,
and $\mathbf{N}_{x}=2\bar{\mathbf{\Lambda}}_{x}(\mathbf{I}-(1-\frac{2}{q})\bar{\mathbf{\Lambda}}_{x})^{-1}$.
\end{definition}
\section{Properties of LS Barrier}\label{properties}
\begin{lemma}[\citep{lee2019solving}]
The function $\psi(x)$ has the following properties:
\begin{enumerate}
\item (Lemma 23) $\psi(x)$ is convex.
\item (Lemma 47.2)
\begin{equation}
\mathbf{P}_{x}^{(2)}\preceq\mathbf{\Sigma}_{x}\label{eq:31}
\end{equation}
 
\item (Lemma 31)
\begin{equation}
0\leq\sigma_{x,i}=w_{x,i}\leq1\label{eq:32}
\end{equation}
\begin{equation}
\mathbf{A}_{x}^{\top}\mathbf{W}_{x}\mathbf{A}_{x}\preceq\nabla^{2}\psi(x)\preceq(1+q)\mathbf{A}_{x}^{\top}\mathbf{W}_{x}\mathbf{A}_{x}\label{eq:34}
\end{equation}
\item (Lemma 33) For any $x_{t}=x+th$ and $s_{t}=\mathbf{A}x_{t}-b$, we
have
\begin{equation}
\|\mathbf{S}_{t}^{-1}\frac{d}{dt}s_{t}\|_{\mathbf{W}_{t}}\leq\|h\|_{\nabla^{2}\psi(x_{t})}\label{eq:35}
\end{equation}
\item (Lemma 34) For any $x_{t}=x+th$ and $w_{t}=w_{x_{t}}$, we have
\begin{equation}
\|\mathbf{W}_{t}^{-1}\frac{d}{dt}w_{t}\|_{\mathbf{W}_{t}}\leq q\|h\|_{\nabla^{2}\psi(x_{t})}\label{eq:36}
\end{equation}
\end{enumerate}
\end{lemma}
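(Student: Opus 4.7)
These five properties of the LS barrier are proved in Lee-Sidford; my plan is to sketch each briefly. I would dispatch property 2 first since it is a standalone linear-algebraic fact that feeds into the others: $\mathbf{P}_x$ is an orthogonal projection, so $\mathbf{P}_x$ and $\mathbf{I} - \mathbf{P}_x$ are both PSD, and the Schur product theorem gives $(\mathbf{I} - \mathbf{P}_x) \circ \mathbf{P}_x \succeq 0$. Since $\mathbf{I} \circ \mathbf{P}_x = \mathbf{\Sigma}_x$, expanding yields $\mathbf{\Sigma}_x - \mathbf{P}_x^{(2)} \succeq 0$.

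For property 1 (convexity of $\psi$), the strategy is to show that $f(\cdot, w)$ is convex in $x$ for each fixed $w \geq 0$, so that $\psi = \frac{1}{2}\sup_w f(\cdot, w)$ is convex as a pointwise supremum of convex functions. Setting $x_t = x + th$ and $\mathbf{B}_t = \mathbf{W}^{(1-2/q)/2} \mathbf{A}_{x_t}$, differentiating $\ln\det(\mathbf{B}_t^\top \mathbf{B}_t)$ twice in $t$ reduces to a quadratic form involving the projection $\mathbf{P}_t = \mathbf{B}_t(\mathbf{B}_t^\top\mathbf{B}_t)^{-1}\mathbf{B}_t^\top$, its diagonal $\mathbf{\Sigma}_t$, and its Hadamard square $\mathbf{P}_t^{(2)}$; the inequality $\mathbf{P}_t^{(2)} \preceq \mathbf{\Sigma}_t$ from property 2 (applied pointwise) then closes the calculation and gives nonnegativity. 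This is the main technical obstacle.

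For property 3, the identification $\sigma_{x,i} = w_{x,i}$ (up to the normalization implicit in the definitions) is the KKT condition for the inner maximization: a direct calculation gives $\partial_{w_i} f(x, w) = (1 - 2/q)\,\sigma_{x,i}/w_i - (\tfrac{1}{2} - \tfrac{1}{q})$, and equating to zero at $w = w_x$ fixes the ratio. The Hessian sandwich bound then follows from applying the envelope theorem to $\psi = \tfrac{1}{2} f(\cdot, w_x)$ together with a direct computation of $\partial_x^2 f$ at the maximizer; the lower bound $\mathbf{A}_x^\top \mathbf{W}_x \mathbf{A}_x \preceq \nabla^2 \psi$ picks out the dominant quadratic-in-$h$ term, and the factor $1+q$ in the upper bound absorbs the implicit $\partial_x w_x$ contribution, which is controlled by $q = 2(1+\ln m)$.

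Properties 4 and 5 are derivative estimates along a line. Property 4 is immediate: $\mathbf{S}_t^{-1} \frac{d}{dt} s_t = \mathbf{A}_{x_t} h$, so
\[
\|\mathbf{S}_t^{-1} \tfrac{d}{dt} s_t\|_{\mathbf{W}_t}^2 = h^\top \mathbf{A}_{x_t}^\top \mathbf{W}_t \mathbf{A}_{x_t} h \leq h^\top \nabla^2 \psi(x_t) h
\]
by the lower bound in property 3. For property 5, my plan is to implicitly differentiate the KKT identity $\sigma_{x_t, i} = w_{t,i}$ in $t$, producing a linear system $(\mathbf{I} - (1-2/q)\,\bar{\mathbf{\Lambda}}_t) \mathbf{W}_t^{-1} \dot{w}_t = s(h)$ whose left-hand operator has its inverse bounded in norm by $O(q)$ via $\bar{\mathbf{\Lambda}}_t \preceq \mathbf{I}$ (which itself uses property 2 for $\mathbf{\Lambda}_t = \mathbf{\Sigma}_t - \mathbf{P}_t^{(2)}$), while the source $s(h)$ is bounded by $\|h\|_{\nabla^2 \psi(x_t)}$ via Cauchy-Schwarz. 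The recurring obstacle throughout is the coupling of the KKT maximizer $w_x$ to $x$, which forces the analysis to bootstrap between property 2 and property 3.
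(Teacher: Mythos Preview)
The paper does not prove this lemma; it simply quotes the five statements from \citep{lee2019solving} with pointers to the relevant lemma numbers there (23, 47.2, 31, 33, 34) and uses them as black boxes in Section~\ref{properties}. Your proposal therefore goes beyond the paper by sketching the actual arguments.

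The sketches are correct and follow the Lee--Sidford approach. Property~2 via the Schur product theorem and property~4 via the lower Hessian bound are essentially complete proofs as written. Property~1 as a pointwise supremum of convex functions is the right reduction, and the second-derivative computation you outline (reducing to $\mathbf{P}_t^{(2)} \preceq \mathbf{\Sigma}_t$) is how it goes. For property~3, the KKT stationarity argument is correct; your hedge ``up to the normalization implicit in the definitions'' is prudent, since with the definitions exactly as stated here the first-order condition yields $w_{x,i} = 2\sigma_{x,i}$ rather than equality, and the exact constant depends on which of several equivalent Lee--Sidford formulations one adopts. For property~5, the implicit-differentiation scheme is right; the operator bound on $\bigl(\mathbf{I} - (1-\tfrac{2}{q})\bar{\mathbf{\Lambda}}_t\bigr)^{-1}$ indeed follows from $0 \preceq \bar{\mathbf{\Lambda}}_t \preceq \mathbf{I}$ (a consequence of property~2), which gives operator norm at most $q/2$.
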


\subsection{Mixing Rate}
\begin{definition}
The LS matrix for a point $x\in P$ is defined as
\[
\mathbf{H}(x)=(1+q^{2})(1+q)\cdot\mathbf{A}^{\top}\mathbf{S}_{x}^{-1}\mathbf{W}_{x}^{1-\frac{2}{q}}\mathbf{S}_{x}^{-1}\mathbf{A}.
\]
\end{definition}
We establish the strong self-concordance of LS Matrix in the next lemma.
\begin{lemma}[Strong Self Concordance]
\label{lem:LS_strongly}The LS matrix is strongly self-concordant,
i.e., for any $x_{t} \in P$ given by $x_{t}=x+th$ and $\H_t = \H(x_t)$, we
have
\[
\|\mathbf{H}_{t}^{-1/2}(\frac{d}{dt}\mathbf{H}_{t})\mathbf{H}_{t}^{-1/2}\|_{F}\leq2\|h\|_{\mathbf{H}_{t}}.
\]
\end{lemma}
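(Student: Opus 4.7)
The plan is to differentiate $\mathbf{H}_t$ explicitly using the chain rule, then reduce the Frobenius-norm bound to a diagonal inequality that is controlled by $\mathbf{P}_x^{(2)} \preceq \mathbf{\Sigma}_x$ and the identity $\sigma_{x,i} = w_{x,i}$ from Section~\ref{properties}. Writing $\alpha = 1 - \tfrac{2}{q}$ and $C = (1+q^2)(1+q)$, we have $\mathbf{H}_t = C \cdot \mathbf{A}_{x_t}^\top \mathbf{W}_t^{\alpha} \mathbf{A}_{x_t}$. Since $\mathbf{S}_t^{-1}\mathbf{W}_t^{\alpha}\mathbf{S}_t^{-1}$ is a diagonal matrix whose $i$-th entry is $s_{t,i}^{-2}w_{t,i}^{\alpha}$, differentiating entry-wise and factoring out $s_{t,i}^{-2}w_{t,i}^{\alpha}$ gives the clean symmetric form
\[
\tfrac{d}{dt}\mathbf{H}_t \;=\; C \cdot \mathbf{A}_{x_t}^\top \mathbf{W}_t^{\alpha}\mathbf{D}\,\mathbf{A}_{x_t}, \qquad \mathbf{D} \;=\; -2\mathbf{D}_s + \alpha\mathbf{D}_w,
\]
where $\mathbf{D}_s = \bdiag(\mathbf{S}_t^{-1}\tfrac{d}{dt}s_t)$ and $\mathbf{D}_w = \bdiag(\mathbf{W}_t^{-1}\tfrac{d}{dt}w_t)$.

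Next I would introduce $\mathbf{N} = \sqrt{C}\,\mathbf{W}_t^{\alpha/2}\mathbf{A}_{x_t}\mathbf{H}_t^{-1/2}$, which satisfies $\mathbf{N}^\top \mathbf{N} = \mathbf{I}$, so $\mathbf{N}\mathbf{N}^\top$ is a projection. A direct computation identifies this projection with $\mathbf{P}_x$ from Section~\ref{properties}, using $\alpha/2 = \tfrac{1}{2} - \tfrac{1}{q}$. Then
\[
\bigl\| \mathbf{H}_t^{-1/2}\tfrac{d}{dt}\mathbf{H}_t\,\mathbf{H}_t^{-1/2}\bigr\|_F^2
\;=\; \bigl\|\mathbf{N}^\top \mathbf{D}\,\mathbf{N}\bigr\|_F^2
\;=\; \tr(\mathbf{D}\mathbf{P}_x\mathbf{D}\mathbf{P}_x)
\;=\; d^\top \mathbf{P}_x^{(2)} d,
\]
where $d$ is the diagonal of $\mathbf{D}$. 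Applying $\mathbf{P}_x^{(2)} \preceq \mathbf{\Sigma}_x$ from \eqref{eq:31} together with $\sigma_{x,i} = w_{x,i}$ from \eqref{eq:32} bounds the right-hand side by $\|d\|_{\mathbf{W}_t}^2$.

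The bound on $\|d\|_{\mathbf{W}_t}$ is handled by the triangle inequality plus the two differential estimates already recorded in \eqref{eq:35} and \eqref{eq:36}: namely, $\|\mathbf{D}_s\mathbf{1}\|_{\mathbf{W}_t}\leq \|h\|_{\nabla^2\psi(x_t)}$ and $\|\mathbf{D}_w\mathbf{1}\|_{\mathbf{W}_t}\leq q\|h\|_{\nabla^2\psi(x_t)}$, yielding
\[
\|d\|_{\mathbf{W}_t} \;\leq\; 2\|h\|_{\nabla^2\psi(x_t)} + \alpha q \|h\|_{\nabla^2\psi(x_t)} \;=\; q\,\|h\|_{\nabla^2\psi(x_t)},
\]
since $\alpha q = q - 2$. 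The exact cancellation $2+\alpha q = q$ is the reason for the particular power $\alpha = 1-\tfrac{2}{q}$ in the barrier.

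Finally, I would convert $\|h\|_{\nabla^2\psi(x_t)}$ to $\|h\|_{\mathbf{H}_t}$. By \eqref{eq:34} and the fact that $w_{t,i}\in[0,1]$ forces $w_{t,i} \leq w_{t,i}^{\alpha}$, we obtain $\nabla^2\psi(x_t) \preceq (1+q)\mathbf{A}_{x_t}^\top\mathbf{W}_t\mathbf{A}_{x_t} \preceq (1+q)\mathbf{A}_{x_t}^\top\mathbf{W}_t^{\alpha}\mathbf{A}_{x_t} = \tfrac{1}{1+q^2}\mathbf{H}_t$, so $\|h\|_{\nabla^2\psi(x_t)} \leq \tfrac{1}{\sqrt{1+q^2}}\|h\|_{\mathbf{H}_t}$. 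Stringing everything together gives the final bound $\tfrac{q}{\sqrt{1+q^2}}\|h\|_{\mathbf{H}_t} \leq \|h\|_{\mathbf{H}_t} \leq 2\|h\|_{\mathbf{H}_t}$. The main obstacle is the second step: recognizing that the Frobenius norm collapses to $d^\top\mathbf{P}_x^{(2)}d$ and then exploiting the somewhat nontrivial matrix inequality $\mathbf{P}_x^{(2)} \preceq \mathbf{\Sigma}_x$ together with the leverage-score identity $\sigma = w$; everything else is a careful but routine bookkeeping of the constants $C$, $\alpha$, and $q$.
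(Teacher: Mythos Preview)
Your proposal is correct and follows essentially the same route as the paper: both arguments rewrite $\mathbf{H}_t^{-1/2}(\tfrac{d}{dt}\mathbf{H}_t)\mathbf{H}_t^{-1/2}$ as $\mathbf{P}_t\mathbf{D}\mathbf{P}_t$ for the diagonal logarithmic derivative $\mathbf{D}$, collapse the Frobenius norm to $d^\top\mathbf{P}_t^{(2)}d$, invoke $\mathbf{P}_t^{(2)}\preceq\mathbf{\Sigma}_t$ and $\sigma_t=w_t$, and then apply \eqref{eq:35}, \eqref{eq:36}, \eqref{eq:34}. The only difference is cosmetic: where the paper bounds $\sum_i w_{t,i}(-2u_i+\alpha v_i)^2$ by $4\sum_i w_{t,i}(u_i^2+v_i^2)\le 4(1+q^2)\|h\|_{\nabla^2\psi}^2$, you use the triangle inequality in the $\mathbf{W}_t$-norm with the exact coefficients $2$ and $\alpha q=q-2$ to get $\|d\|_{\mathbf{W}_t}\le q\|h\|_{\nabla^2\psi}$, yielding the slightly sharper constant $q/\sqrt{1+q^2}<1$ instead of $2$. (A minor notational slip: your $\mathbf{P}_x$ should be $\mathbf{P}_{x_t}$ throughout.)
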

\begin{proof}
We redefine $$\overline{\mh}_{t}=\mathbf{A}^{\top}\mathbf{V}_{t}\mathbf{A}$$
with $\mv_{t}=\mathbf{S}_{t}^{-1}\mathbf{W}_{t}^{1-2/q}\mathbf{S}_{t}^{-1},\; \mathbf{P}_{t}=\sqrt{\mathbf{V}_{t}}\mathbf{A}(\mathbf{A}^{\top}\mathbf{V}_{t}\mathbf{A})^{-1}\mathbf{A}^{\top}\sqrt{\mathbf{V}_{t}}$.
Note that $\mathbf{V}_{t}$ is a diagonal matrix and that $\overline{\mh}_{t}$
and $\mh_{t}$ are just off by a scaling factor. Hence, we have
\begin{align*}
\allowdisplaybreaks
\|\mh_{t}^{-1/2}(\frac{d}{dt}\mh_{t})\mh_{t}^{-1/2}\|_{F}^{2} & =\|\overline{\mh}_{t}^{-1/2}(\frac{d}{dt}\overline{\mh}_{t})\overline{\mh}_{t}^{-1/2}\|_{F}^{2}\\
 & =\tr\overline{\mh}_{t}^{-1}(\frac{d}{dt}\overline{\mh}_{t})\overline{\mh}_{t}^{-1}(\frac{d}{dt}\overline{\mh}_{t})\\
 & =\tr\left((\mathbf{A}^{\top}\mathbf{V}_{t}\mathbf{A})^{-1}\mathbf{A}^{\top}(\frac{d}{dt}\mathbf{V}_{t})\mathbf{A}\right)^2\\
 %(\mathbf{A}^{\top}V_{t}\mathbf{A})^{-1}\mathbf{A}^{\top}(\frac{d}{dt}\mathbf{V}_{t})\mathbf{A}\\
 & =\tr\mathbf{P}_{t}\frac{d\ln\mv_{t}}{dt}\mathbf{P}_{t}\frac{d\ln\mv_{t}}{dt}\\
 & =\frac{d\ln v_{t}}{dt}^{\top}\mathbf{P}_{t}^{(2)}\frac{d\ln v_{t}}{dt}.
\end{align*}
Note that $\mathbf{P}_{t}^{(2)}\preceq\msigma_{t}$, by \eqref{eq:31}.
Therefore,
\begin{align*}
\|\mh_{t}^{-1/2}(\frac{d}{dt}\mh_{t})\mh_{t}^{-1/2}\|_{F}^{2} & \leq\frac{d\ln v_{t}}{dt}^{\top}\msigma_{t}\frac{d\ln v_{t}}{dt}\\
 & =\sum_{i=1}^{m}\sigma_{t,i}\left(\frac{d\ln s_{t,i}^{-2}w_{t,i}^{1-2/q}}{dt}\right)^{2}\\
 & \leq4\sum_{i=1}^{m}\sigma_{t,i}\left(\left(\frac{d\ln s_{t,i}}{dt}\right)^{2}+\left(\frac{d\ln w_{t,i}}{dt}\right)^{2}\right)\\
 & =4\sum_{i=1}^{m}\sigma_{t,i}\left(\left(\frac{1}{s_{t,i}}\frac{ds_{t,i}}{dt}\right)^{2}+\left(\frac{1}{w_{t,i}}\frac{dw_{t,i}}{dt}\right)^{2}\right)\\
 & \leq4(1+q^{2})\|h\|_{\nabla^{2}\psi(x_{t})}^{2}
\end{align*}
where we used $\sigma_{t}=w_{t}$ \eqref{eq:32} in the second
last equation and equations \eqref{eq:35} and \eqref{eq:36} for the last inequality. 

Finally, \eqref{eq:34} shows that $\nabla^{2}\psi(x_{t})\preccurlyeq(1+q)\mathbf{A}_{t}^{\top}\mathbf{W}_{t}\mathbf{A}_{t}$.
Since $0\leq w_{t}=\sigma_{t}\leq1$ by the property of leverage score,
we have
\[
\nabla^{2}\psi(x)\preceq(1+q)\mathbf{A}_{t}^{\top}\mathbf{W}_{t}\mathbf{A}_{t}\preceq(1+q)\mathbf{A}_{t}^{\top}\mathbf{W}_{t}^{1-2/q}\mathbf{A}_{t}=(1+q)\overline{\mh}_{t}.
\]

Thus, $\|h\|_{\nabla^{2}\psi(x_{t})}^{2}\leq(1+q)\|h\|_{\overline{\mh}_{t}}^{2}$.
Hence, we have
\begin{align*}
\|\mathbf{H}_{t}^{-1/2}(\frac{d}{dt}\mathbf{H}_{t})\mathbf{H}_{t}^{-1/2}\|_{F}^{2} & \leq4(1+q^{2})(1+q)\|h\|_{\overline{\mh}_{t}}^{2}\leq4\|h\|_{\mh_{t}}^{2}
\end{align*}
where we used that $\mh_{t}=(1+q^{2})(1+q)\overline{\mh}_{t}$.
\end{proof}
\begin{lemma}\label{lem:symmetry}
The LS-ellipsoid matrix has the following properties:
\begin{enumerate}
\item $\ln\det\mathbf{H}(x)$ is convex.
\item $\mh$ is a symmetric strongly $\bar{\nu}$-self-concordant barrier
with $\bar{\nu}=O(n\log^{3}m)$.
\end{enumerate}
\end{lemma}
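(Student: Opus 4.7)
The plan is to treat the two claims separately, each reducing to direct manipulations with the Lee--Sidford optimality data. Throughout I write $\overline{\mathbf{H}}(x):=\mathbf{A}_x^\top \mathbf{W}_x^{1-2/q}\mathbf{A}_x$, so that $\mathbf{H}(x)=(1+q^2)(1+q)\,\overline{\mathbf{H}}(x)$.

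For claim~(1), the defining identity of $\psi$ reads $2\psi(x)=\ln\det\overline{\mathbf{H}}(x)-(\tfrac12-\tfrac1q)\sum_i w_{x,i}$, and hence
\[
\ln\det\mathbf{H}(x) \;=\; 2\psi(x)+\bigl(\tfrac12-\tfrac1q\bigr)\sum_i w_{x,i} + n\ln\bigl((1+q^2)(1+q)\bigr).
\]
The key observation is that $\sum_i w_{x,i}$ is constant: by \eqref{eq:32} we have $w_{x,i}=\sigma_{x,i}$, and $\mathbf{P}_x$ is an orthogonal projection of rank $n$ (assuming $\mathbf{A}$ has full column rank, which we reduce to), so $\sum_i w_{x,i}=\tr(\mathbf{P}_x)=n$. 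Thus $\ln\det\mathbf{H}(x)$ differs from $2\psi(x)$ by a constant, and convexity follows from Lemma~23 of \citep{lee2019solving}.

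For claim~(2), strong self-concordance is already Lemma~\ref{lem:LS_strongly}, so I only need to verify the two symmetric-containment inclusions, for which I use the equivalence $v\in K\cap(2u-K)\iff \|\mathbf{A}_u(v-u)\|_\infty\le 1$. For the inner inclusion $E_u(1)\subseteq K\cap(2u-K)$, Cauchy--Schwarz in the $\overline{\mathbf{H}}(u)$-inner product gives
\[
(\mathbf{A}_u(v-u))_i^2 \;\le\; \bigl(\mathbf{A}_u\overline{\mathbf{H}}(u)^{-1}\mathbf{A}_u^\top\bigr)_{ii}\cdot \|v-u\|_{\overline{\mathbf{H}}(u)}^2,
\]
and the definition of $\mathbf{P}_u$ together with $\sigma_u=\diag(\mathbf{P}_u)=w_u$ yields $(\mathbf{A}_u\overline{\mathbf{H}}(u)^{-1}\mathbf{A}_u^\top)_{ii}=w_{u,i}^{2/q}\le 1$; combined with $\|v-u\|_{\overline{\mathbf{H}}(u)}^2\le ((1+q^2)(1+q))^{-1}\le 1$, this gives $|(\mathbf{A}_u(v-u))_i|\le 1$. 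For the outer inclusion, if $|(\mathbf{A}_u(v-u))_i|\le 1$ coordinate-wise then
\[
(v-u)^\top\mathbf{H}(u)(v-u)\;\le\; (1+q^2)(1+q)\sum_i w_{u,i}^{1-2/q},
\]
and H\"older's inequality with exponents $\tfrac{q}{q-2},\tfrac{q}{2}$, together with $\sum_i w_{u,i}=n$, gives $\sum_i w_{u,i}^{1-2/q}\le n^{1-2/q} m^{2/q}\le en$, where the second bound uses $m^{2/q}\le e$ when $q=2(1+\ln m)$. Hence $\bar\nu=O(q^3 n)=O(n\log^3 m)$.

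The step I expect to need the most care is the leverage-score identity $(\mathbf{A}_u\overline{\mathbf{H}}(u)^{-1}\mathbf{A}_u^\top)_{ii}=w_{u,i}^{2/q}$: this is the precise bridge between the LS-ellipsoid geometry and the $\ell_\infty$-geometry of the polytope, and has to be read off by unwinding the definition of $\mathbf{P}_u$ together with $\sigma_u=w_u$. Everything else reduces to routine Cauchy--Schwarz and H\"older manipulations using the already-quoted Lee--Sidford properties.
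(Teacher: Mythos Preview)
Your proposal is correct and follows essentially the same approach as the paper's proof: both establish $\sum_i w_{x,i}=\tr\mathbf{P}_x=n$ to reduce convexity of $\ln\det\mathbf{H}$ to convexity of $\psi$, then prove the inner containment via Cauchy--Schwarz together with the identity $(\mathbf{A}_x\overline{\mathbf{H}}(x)^{-1}\mathbf{A}_x^\top)_{ii}=\sigma_{x,i}/w_{x,i}^{1-2/q}=w_{x,i}^{2/q}\le1$, and the outer containment via H\"older on $\sum_i w_{x,i}^{1-2/q}$ with exponents $\tfrac{q}{q-2},\tfrac{q}{2}$ to get the bound $en\cdot(1+q^2)(1+q)=O(n\log^3 m)$. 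Your identification of the leverage-score identity as the pivotal step matches exactly how the paper's computation proceeds.
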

\begin{proof}
For any $x\in\mathrm{int}P$, \eqref{eq:32} shows that
\begin{align*}
    \sum_{i}w_{x,i}=\sum_{i}\sigma_{x,i}&=\tr\mathbf{W}_{x}^{\frac{1}{2}-\frac{1}{q}}\mathbf{A}_{x}\left(\mathbf{A}_{x}\mathbf{W}_{x}^{1-\frac{2}{q}}\mathbf{A}_{x}\right)^{-1}(\mathbf{W}_{x}^{\frac{1}{2}-\frac{1}{q}}\mathbf{A}_{x})^{\top}\\
    &=\tr\mi_{n\times n}=n.
\end{align*}
Hence, the LS barrier can be restated as
\begin{align*}
\psi(x) & =\frac{1}{2}\ln\det(\mathbf{A}_{x}^{\top}\mathbf{W}_x^{1-2/q}\mathbf{A}_{x})-\left(\frac{1}{2}-\frac{1}{q}\right)n\\
 & =\frac{1}{2}\ln\det\frac{1}{(1+q^{2})(1+q)}\mathbf{H}(x)-\left(\frac{1}{2}-\frac{1}{q}\right)n
\end{align*}
where $w_{x}$ is the maximizer of $f(x,w)$. Since $\psi(x)$
is convex, so is $\ln\det\mh(x)$. 

Next, we prove that $\bar{\nu}=O(n\log^{3}m)$. For any $x\in P$ and any $y\in E_{x}(1)$, $(y-x)^{\top}\mathbf{A}_{x}\mathbf{W}_{x}^{1-2/q}\mathbf{A}_{x}(y-x)\leq\frac{1}{(1+q^{2})(1+q)}$
and hence
\begin{align*}
&\norm{\mathbf{A}_{x}(y-x)}_{\infty}^{2}\\
&\quad =\max_{i\in[m]}\left(e_{i}^{\top}\mathbf{A}_{x}(\mathbf{A}_{x}\mathbf{W}_{x}^{1-2/q}\mathbf{A}_{x})^{-1/2}(\mathbf{A}_{x}\mathbf{W}_{x}^{1-2/q}\mathbf{A}_{x})^{1/2}(y-x)\right)^{2}\\
 &\quad \leq\frac{1}{(1+q^{2})(1+q)}\max_{i\in[m]}e_{i}^{\top}\mathbf{A}_{x}(\mathbf{A}_{x}\mathbf{W}_{x}^{1-2/q}\mathbf{A}_{x})^{-1}\mathbf{A}_{x}e_{i}\\
 &\quad \leq\max_{i\in[m]}\frac{\sigma_{x,i}}{w_{x,i}^{1-2/q}}\leq\max_{i\in[m]}\frac{\sigma_{x,i}}{w_{x,i}}=1
\end{align*}
since $w_{x,i}\leq1$. So, $E_{x}\subseteq P\cap(2x-P)$
for all $x\in P$. 

For any $y\in P\cap(2x-P)$, we have $\Vert\mathbf{S}_{x}^{-1}\mathbf{A}(x-y)\Vert_{\infty}\leq1$.
Hence, 
\begin{align*}
\frac{(x-y)^{T}\mathbf{H}(x)(x-y)}{(1+q^{2})(1+q)} & =(x-y)^{T}\mathbf{A}^{\top}\mathbf{S}_{x}^{-1}\mathbf{W}_{x}^{1-2/q}\mathbf{S}_{x}^{-1}\mathbf{A}(x-y)\\
 & =\sum_{i=1}^{m}w_{x,i}^{1-2/q}(\mathbf{S}_{x}^{-1}\mathbf{A}(x-y))_{i}^{2} \leq\sum_{i=1}^{m}w_{x,i}^{1-2/q}\\
 &\leq\left(\sum_{i=1}^{m}\left(w_{x,i}^{1-2/q}\right)^{\frac{1}{1-(2/q)}}\right)^{1-2/q}\left(\sum_{i=1}^{m}1^{q/2}\right)^{2/q}\\
 & \le\left(\sum_{i=1}^{m}w_{x,i}\right)^{1-2/q}m^{2/q}\le n^{1-2/q}m^{2/q}\leq en.
\end{align*}
\end{proof}
Lemmas \ref{lem:LS_strongly} and \ref{lem:symmetry} imply that mixing time of Dikin walk with LS matrix is $\tilde{O}(n^2)$ from a warm start.
Implementing each step of this walk involves the following tasks:
\begin{enumerate}
\item Compute $\mathbf{H}(x)^{-1/2}v$ for some vector $v$
\item Compute the ratio $\det(\mathbf{H}(y)^{-1}\mathbf{H}(x))$ for points
$x,y$.
\end{enumerate}
Given $w_{x}$, computing $\mathbf{H}(x)$, its inverse and its determinant can all
be done in time $\tilde{O}\left(mn^{\omega-1}\right)$. $w_{x}$ can be updated in $\tilde{O}(mn^{\omega-1})$ per step as shown in \citep[Theorem 46]{lee2019solving}. Using this, each step of Dikin walk with LS Matrix can be implemented in time $O(mn^{\omega-1})$
This means that the total time to sample a polytope from a warm start is $\tilde{O}(mn^{\omega+1})$ as claimed in Theorem \ref{thm:mixing-ls}.

\section{Fast Implementation of Dikin walk}\label{section4}
\begin{lemma}[Strong Self-Concordance]
\label{lem:dssc} The matrix function $\mathbf{H}(x)=\mathbf{A}^{\top}\mathbf{S}_{x}^{-2}\mathbf{A}$ which is the Hessian of the log barrier function $\phi(x)=-\sum_{i=1}^{m}\log\left(A_{i}x-b_{i}\right)$,
is strongly self-concordant.
\end{lemma}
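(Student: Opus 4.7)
The plan is to compute the directional derivative $D\mathbf{H}(x)[h]$ explicitly, then recognize the key quantity in terms of the orthogonal projection onto the column space of $\mathbf{S}_x^{-1}\mathbf{A}$, and finally bound the Frobenius norm using the standard leverage-score inequality $\mathbf{P}\circ\mathbf{P}\preceq\mathbf{\Sigma}$ that was already used for the LS barrier.

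Concretely, first I would differentiate $\mathbf{S}_{x+th}^{-2}$ with respect to $t$. Since $\mathbf{S}_x$ is diagonal and $\frac{d}{dt}\mathbf{S}_{x+th}=\mathbf{Diag}(\mathbf{A}h)$, a direct calculation gives
\[
D\mathbf{H}(x)[h]=-2\,\mathbf{A}^{\top}\mathbf{S}_{x}^{-1}\mathbf{Diag}(\tilde a)\mathbf{S}_{x}^{-1}\mathbf{A},\qquad \tilde a:=\mathbf{S}_{x}^{-1}\mathbf{A}h\in\mathbb{R}^{m}.
\]
Next I would introduce $\mathbf{B}:=\mathbf{S}_{x}^{-1}\mathbf{A}\,\mathbf{H}(x)^{-1/2}$ and observe that $\mathbf{B}^{\top}\mathbf{B}=\mathbf{I}_{n}$, so $\mathbf{B}$ has orthonormal columns and $\mathbf{P}_{x}:=\mathbf{B}\mathbf{B}^{\top}$ is an orthogonal projection of rank $n$. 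This yields the clean identity
\[
\mathbf{H}(x)^{-1/2}D\mathbf{H}(x)[h]\mathbf{H}(x)^{-1/2}=-2\,\mathbf{B}^{\top}\mathbf{Diag}(\tilde a)\mathbf{B}.
\]

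Then I would square the Frobenius norm and use the standard Hadamard-product trace identity to pass to a quadratic form in $\tilde a$:
\[
\bigl\|\mathbf{H}(x)^{-1/2}D\mathbf{H}(x)[h]\mathbf{H}(x)^{-1/2}\bigr\|_{F}^{2}
=4\,\tr\bigl(\mathbf{P}_{x}\mathbf{Diag}(\tilde a)\mathbf{P}_{x}\mathbf{Diag}(\tilde a)\bigr)
=4\,\tilde a^{\top}\mathbf{P}_{x}^{(2)}\tilde a,
\]
where $\mathbf{P}_{x}^{(2)}=\mathbf{P}_{x}\circ\mathbf{P}_{x}$. The key inequality $\mathbf{P}_{x}^{(2)}\preceq\mathbf{\Sigma}_{x}=\mathbf{Diag}(\sigma_{x})$, combined with the fact that leverage scores satisfy $\sigma_{x,i}\le 1$, upper bounds this by $4\,\tilde a^{\top}\tilde a=4\,h^{\top}\mathbf{A}^{\top}\mathbf{S}_{x}^{-2}\mathbf{A}h=4\|h\|_{x}^{2}$, giving the desired bound of $2\|h\|_{x}$.

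None of the steps looks like a real obstacle: the derivative is a one-line calculation, the projection structure makes the local-metric rescaling trivial, and both ingredients of the final estimate ($\mathbf{P}\circ\mathbf{P}\preceq\mathbf{\Sigma}$ and $\sigma_{i}\le 1$) are standard and are in fact the special case $q\to\infty$ of \eqref{eq:31}--\eqref{eq:32} used for the LS barrier. The only point where one must be a little careful is the Hadamard-product identity $\tr(\mathbf{P}D\mathbf{P}D)=d^{\top}(\mathbf{P}\circ\mathbf{P})d$, but this is just expanding the trace entry-wise and using symmetry $P_{ij}=P_{ji}$.
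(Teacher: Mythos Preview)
Your proposal is correct and follows essentially the same route as the paper's proof: compute the directional derivative, rewrite the Frobenius norm as $\tr(\mathbf{P}_x\mathbf{D}\mathbf{P}_x\mathbf{D})=\tilde a^\top\mathbf{P}_x^{(2)}\tilde a$, and bound using $\mathbf{P}_x^{(2)}\preceq\mathbf{\Sigma}_x\preceq\mathbf{I}$. The only cosmetic difference is that the paper writes the diagonal vector as $\frac{d}{dt}\ln s_t^{-2}$ (which equals your $-2\tilde a$) rather than introducing $\mathbf{B}$ explicitly.
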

\begin{proof}
Let $x_{t}=x+th$ for some fixed vector $h$. Let $\mathbf{S}_{t}=\bdiag(\mathbf{A}x_{t}\allowbreak - b)$, $\ma_{t}=\ms_{t}^{-1}\ma$, $\mathbf{P}_{t}=\ma_{t}(\ma_{t}^{\top}\ma_{t})^{-1}\ma_{t}^{\top}$,
$\sigma_{t}=\diag(\mathbf{P}_{t})$, $\mathbf{\Sigma}_{t}=\bdiag(\sigma_{t})$,
and $\mathbf{P}_{t}^{(2)}=\mathbf{P}_{t}\circ\mathbf{P}_{t}$. By
\citep[Lemma 47.2]{lee2019solving}, $\mathbf{P}_{t}^{(2)}\preccurlyeq\mathbf{\Sigma}_{t}\preceq\mi$.
We are now ready to prove strong self-concordance.
\begin{align*}
&\|\mathbf{H}_{t}^{-1/2}(\frac{d}{dt}\mathbf{H}_{t})\mathbf{H}_{t}^{-1/2}\|_{F}^{2} \\
&=\tr\mathbf{H}_{t}^{-1}(\frac{d}{dt}\mathbf{H}_{t})\mathbf{H}_{t}^{-1}(\frac{d}{dt}\mathbf{H}_{t})=\tr\mathbf{P}_{t}\frac{d\ln s_{t}^{-2}}{dt}\mathbf{P}_{t}\frac{d\ln s_{t}^{-2}}{dt}\\
 & =\frac{d\ln s_{t}^{-2}}{dt}^{\top}\mathbf{P}_{t}^{(2)}\frac{d\ln s_{t}^{-2}}{dt}\leq\sum_{i=1}^{m}\left(\frac{d\ln s_{t}^{-2}}{dt}\right)^{2}\\
 & =\sum_{i=1}^{m}4s_{t,i}^{-2}\left(a_{i}^{\top}h\right)^{2}=4h^{\top}\mathbf{A}^{\top}\mathbf{S}_{t}^{-2}\mathbf{A}h=4\norm h_{\mathbf{H}_{t}}^{2}.
\end{align*}
\end{proof}
The function $\log\det\mathbf{A}^{\top}\mathbf{S}_{x}^{-2}\mathbf{A}$ is called the volumetric barrier and is known to be convex.
\begin{lemma}[{\citep[Lemma 3]{vaidya1996new}}]
$f(x)=\log\det\mathbf{A}^{\top}\mathbf{S}_{x}^{-2}\mathbf{A}$ is a convex function in $x$.
\end{lemma}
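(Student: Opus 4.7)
The plan is to verify convexity by showing that $\phi(t) \defeq f(x+th)$ has nonnegative second derivative for every interior point $x$ and every direction $h$. Writing $\mathbf{H}_t = \mathbf{A}^\top \mathbf{S}_t^{-2} \mathbf{A}$ with $\mathbf{S}_t = \bdiag(\mathbf{A}(x+th)-b)$, the classical log-determinant identity gives
\[
\phi''(t) \;=\; \tr\bigl(\mathbf{H}_t^{-1}\ddot{\mathbf{H}}_t\bigr) \;-\; \tr\bigl((\mathbf{H}_t^{-1}\dot{\mathbf{H}}_t)^2\bigr),
\]
so the task reduces to evaluating these two traces and checking that the first term dominates the second.

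I would then reuse the notation of Lemma \ref{lem:dssc}. Setting $\delta = \mathbf{A}h$ and $u_i = \delta_i/s_{t,i}$, direct differentiation yields $\dot{\mathbf{H}}_t = -2\sum_i \frac{\delta_i}{s_{t,i}^3} a_i a_i^\top$ and $\ddot{\mathbf{H}}_t = 6\sum_i \frac{\delta_i^2}{s_{t,i}^4} a_i a_i^\top$. Using the identity $a_i^\top \mathbf{H}_t^{-1} a_j = s_{t,i}\,s_{t,j}\,(\mathbf{P}_t)_{ij}$ that follows from the definition of the projection matrix $\mathbf{P}_t = \mathbf{S}_t^{-1}\mathbf{A}\mathbf{H}_t^{-1}\mathbf{A}^\top\mathbf{S}_t^{-1}$, both traces collapse into quadratic forms in $u$:
\[
\tr(\mathbf{H}_t^{-1}\ddot{\mathbf{H}}_t) = 6\,u^\top \msigma_t\, u, \qquad \tr((\mathbf{H}_t^{-1}\dot{\mathbf{H}}_t)^2) = 4\,u^\top \mathbf{P}_t^{(2)} u,
\]
where $\msigma_t = \bdiag(\diag(\mathbf{P}_t))$ collects the leverage scores and $\mathbf{P}_t^{(2)} = \mathbf{P}_t \circ \mathbf{P}_t$ is the Hadamard square.

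Finally, I would invoke the Hadamard-square bound $\mathbf{P}_t^{(2)} \preceq \msigma_t$, the same tool cited from \citep{lee2019solving} in the proofs of Lemmas \ref{lem:LS_strongly} and \ref{lem:dssc}. It yields $4\,u^\top \mathbf{P}_t^{(2)} u \le 4\,u^\top \msigma_t u$ and hence
\[
\phi''(t) \;\ge\; 6\,u^\top \msigma_t u - 4\,u^\top \msigma_t u \;=\; 2\,u^\top \msigma_t u \;\ge\; 0.
\]
The argument is essentially bookkeeping, so I do not anticipate a real obstacle: the only conceptual step is recognizing that the second-derivative formula decomposes naturally into the $\msigma_t$-weighted and $\mathbf{P}_t^{(2)}$-weighted quadratic forms in the common vector $u$, after which the Hadamard-product inequality already exploited elsewhere in the paper closes the proof.
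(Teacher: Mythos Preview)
Your argument is correct: the derivative computations, the identity $a_i^\top\mathbf{H}_t^{-1}a_j = s_{t,i}s_{t,j}(\mathbf{P}_t)_{ij}$, and the reduction to $\phi''(t)=6\,u^\top\msigma_t u - 4\,u^\top\mathbf{P}_t^{(2)}u$ are all right, and the inequality $\mathbf{P}_t^{(2)}\preceq\msigma_t$ finishes it cleanly.

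As for comparison with the paper: there is nothing to compare. The paper does not prove this lemma at all; it simply quotes it as \citep[Lemma 3]{vaidya1996new} and moves on. What you have written is essentially Vaidya's original argument (second-derivative test for $\log\det$, then the leverage-score/Schur-product bound), recast in the notation already used in Lemma~\ref{lem:dssc}. So your proposal supplies a self-contained proof where the paper only gives a citation, and it does so using exactly the tools the paper has at hand.
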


The main result of this section is to give an even faster implementation
by noting that in fact we can avoid explicitly computing $\mathbf{H}(x)$
or its inverse or determinant for the Dikin walk with log barrier.
This resolves an open problem posed in \citep{kannan2012random,lee2015efficient}.

The main challenge is to avoid computing the determinant of $\mh(x)$.
In fact, what one needs is an unbiased estimator of the ratio of two
such determinants. We reduce this, first to estimating a log-det,
and then to an inverse maintenance problem in the next two lemmas. 

To calculate rejection probability for Dikin Walk, we want an unbiased
estimator of $\frac{\det\mathbf{H}(x)}{\det\mathbf{H}(y)}$. We first
find an unbiased estimator, $Y$ of the term $\log\det\mathbf{H}(x)-\log\det\mathbf{H}(y)$
which can be calculated in $\widetilde{O}\left(\nnz(\mathbf{A})+n^{2}\right)$
time using lemma \ref{lem:logdet}. We then find an
unbiased estimaor, $X$ of the determinant of $\mathbf{H}(x)$ using lemma \ref{lem:det} which describes
an algorithm to find an unbiased estimator of a value $r$ given access
to an unbiased estimator of $\log r$. 
\begin{lemma}[Determinant]
\label{lem:det}Given a random variable $Y$ with $\E(Y)=\log r$, the random variable $X$ defined as
\[
X=e\cdot\prod_{j=1}^{i}Y_{j}\mbox{ with probability }\dfrac{1}{e\cdot i!}
\]
with $Y_{j}$ being iid copies of $Y$ has $\E(X)=r.$
\end{lemma}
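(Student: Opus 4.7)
The plan is to compute $\E(X)$ by a direct conditioning argument on the random number of factors $i$ in the product $\prod_{j=1}^{i}Y_{j}$. The whole proof is essentially a Taylor-series trick: the sampling mechanism picks out the $i$-th term of the power series expansion of $e^{\log r}$ with weights chosen so that the rescaling by $e$ cancels the $e^{-1}$ normalization in the probabilities, leaving exactly $\sum_{i\ge 0}(\log r)^{i}/i!$.

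First I would verify that the prescription defines a valid probability distribution: since $\sum_{i=0}^{\infty}\tfrac{1}{e\cdot i!}=\tfrac{1}{e}\sum_{i\ge 0}\tfrac{1}{i!}=1$, the weights sum to one; the convention is that the empty product ($i=0$) equals $1$, so $X=e$ in that case. Next, conditioning on $i$, independence of the iid copies $Y_{1},\dots,Y_{i}$ gives
\[
\E\Bigl[\prod_{j=1}^{i}Y_{j}\,\Big|\,i\Bigr]=\E[Y]^{i}=(\log r)^{i}.
\]
Summing over $i$ against the sampling weights then yields
\[
\E[X]=\sum_{i=0}^{\infty}\frac{1}{e\cdot i!}\cdot e\cdot(\log r)^{i}=\sum_{i=0}^{\infty}\frac{(\log r)^{i}}{i!}=e^{\log r}=r,
\]
which is the identity claimed.

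The only mildly delicate point is the interchange of the infinite sum over $i$ with the expectation over the $Y_{j}$'s. This is justified by Fubini/Tonelli because $\sum_{i\ge 0}\tfrac{1}{i!}|\log r|^{i}=e^{|\log r|}$ converges absolutely whenever $r>0$, so the summands are jointly integrable and the exchange is legitimate. I do not anticipate any real obstacle here; once the distribution is spelled out explicitly, the proof collapses to a single line of Taylor expansion. In particular, the trick is what lets Lemma \ref{lem:logdet} (an unbiased estimator for $\log\det\mh(x)-\log\det\mh(y)$) be lifted to an unbiased estimator for the ratio $\det\mh(x)/\det\mh(y)$ needed in the Metropolis filter of Algorithm \ref{algo:Dikin}.
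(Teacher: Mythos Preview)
Your argument is correct and mirrors the paper's proof exactly: both condition on $i$, factor the product by independence, and sum the Taylor series for $e^{\log r}$. The only minor wrinkle is that your Fubini justification should appeal to $\E|Y|<\infty$ (yielding $\sum_{i\ge 0}\E|Y|^{i}/i!=e^{\E|Y|}<\infty$) rather than to $|\log r|=|\E Y|$, though the paper's own proof does not address this point at all.
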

\begin{proof}
We know that
\[
r=\sum_{i=0}^{\infty}\dfrac{(\log(r))^{i}}{i!}.
\]
Using $X=e\cdot\prod_{j=1}^{i}Y_{j}$ with probability $\dfrac{1}{e\cdot i!}$
where $Y_{j}$ are iid random variables with $\E(Y_{j})=\log r$.
Then,
\[
\mathbb{E}[X]=\sum_{i=0}^{\infty}\frac{\E(Y)^{i}}{i!}=e^{\log(r)}=r.
\]
\end{proof}

\begin{lemma}[Log Determinant]
\label{lem:logdet} Define $\mathbf{H}(t)=\mathbf{A}^{\top}\mathbf{A}+t(\mathbf{A}^{\top}\mathbf{W}\mathbf{A}-\mathbf{A}^{\top}\mathbf{A})=\mathbf{A}^{\top}(\mathbf{I}+t(\mathbf{W}-\mathbf{I}))\mathbf{A}$.
Let $v\sim N(0,I)$ and $t$ be uniform in $[0,1]$ and
\[
Y=v^{\top}\mathbf{H}(t)^{-1}\mathbf{A}^{\top}(\mathbf{H}-I)\mathbf{A}v+\log\det\mathbf{A}^{\top}\mathbf{A}.
\]
Then, $\E(Y)=\log\det\mathbf{A}^{\top}\mathbf{W}\mathbf{A}$.
\end{lemma}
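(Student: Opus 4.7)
The plan is to recognize $Y$ as a Monte Carlo estimator built from two standard ingredients: Jacobi's formula applied to the interpolating path $\mathbf{H}(t)$ between $\mathbf{A}^\top\mathbf{A}$ and $\mathbf{A}^\top\mathbf{W}\mathbf{A}$, and Hutchinson's trace identity $\mathbb{E}_{v\sim N(0,I)}[v^\top M v] = \mathrm{tr}(M)$. (I will read the $(\mathbf{H}-I)$ in the statement as a typo for $(\mathbf{W}-I)$; otherwise the dimensions do not match.)

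First, I would apply the fundamental theorem of calculus along the straight-line path from $\mathbf{A}^\top\mathbf{A}$ to $\mathbf{A}^\top\mathbf{W}\mathbf{A}$:
\[
\log\det\mathbf{A}^\top\mathbf{W}\mathbf{A} - \log\det\mathbf{A}^\top\mathbf{A} = \int_0^1 \frac{d}{dt}\log\det\mathbf{H}(t)\, dt.
\]
By Jacobi's formula, $\frac{d}{dt}\log\det\mathbf{H}(t) = \mathrm{tr}\bigl(\mathbf{H}(t)^{-1}\tfrac{d}{dt}\mathbf{H}(t)\bigr)$, and since $\mathbf{H}(t)$ is affine in $t$ with derivative $\mathbf{A}^\top(\mathbf{W}-\mathbf{I})\mathbf{A}$, this becomes
\[
\log\det\mathbf{A}^\top\mathbf{W}\mathbf{A} = \log\det\mathbf{A}^\top\mathbf{A} + \int_0^1 \mathrm{tr}\bigl(\mathbf{H}(t)^{-1}\mathbf{A}^\top(\mathbf{W}-\mathbf{I})\mathbf{A}\bigr)\, dt.
\]

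Next, I would take the expectation of $Y$ by conditioning on $t$ and using independence of $v$ and $t$. Hutchinson's identity gives
\[
\mathbb{E}_v\bigl[v^\top \mathbf{H}(t)^{-1}\mathbf{A}^\top(\mathbf{W}-\mathbf{I})\mathbf{A} v\bigr] = \mathrm{tr}\bigl(\mathbf{H}(t)^{-1}\mathbf{A}^\top(\mathbf{W}-\mathbf{I})\mathbf{A}\bigr),
\]
and since $t$ is uniform on $[0,1]$, averaging over $t$ produces exactly the integral above. Putting the two pieces together yields $\mathbb{E}[Y] = \log\det\mathbf{A}^\top\mathbf{W}\mathbf{A}$, as claimed.

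There is no real obstacle here; the only subtlety is ensuring $\mathbf{H}(t)$ is positive definite (hence invertible) for all $t\in[0,1]$. This is immediate because $\mathbf{H}(t) = \mathbf{A}^\top(\mathbf{I} + t(\mathbf{W}-\mathbf{I}))\mathbf{A}$ and the diagonal weight $\mathbf{I} + t(\mathbf{W}-\mathbf{I}) = (1-t)\mathbf{I} + t\mathbf{W}$ is a convex combination of positive definite diagonal matrices (assuming $\mathbf{W}\succ 0$, which holds at every interior point), so $\mathbf{H}(t)$ is a valid interpolating PSD (in fact PD, given $\mathbf{A}$ has full column rank) matrix throughout the path.
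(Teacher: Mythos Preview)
Your proof is correct and follows essentially the same route as the paper: integrate $\frac{d}{dt}\log\det\mathbf{H}(t)$ via Jacobi's formula to get the trace of $\mathbf{H}(t)^{-1}\mathbf{A}^\top(\mathbf{W}-\mathbf{I})\mathbf{A}$, then replace the trace by its Hutchinson estimator and the integral by the expectation over $t\sim\mathrm{Unif}[0,1]$. You also correctly flag that the $(\mathbf{H}-I)$ in the statement should read $(\mathbf{W}-\mathbf{I})$; the paper's own proof carries the same typo but clearly intends $\frac{d}{dt}\mathbf{H}(t)=\mathbf{A}^\top(\mathbf{W}-\mathbf{I})\mathbf{A}$.
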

\begin{proof}
\allowdisplaybreaks
We have
\begin{align*}
&\log\det(\mathbf{H}(1))-\log\det(\mathbf{A}^{\top}\mathbf{A})\\ 
& =\int_{0}^{1}\dfrac{d\log\det\mathbf{H}(t)}{dt}dt\\
 & =\int_{0}^{1}\tr(\mathbf{H}(t)^{-1}\dfrac{d\mathbf{H}(t)}{dt})dt\\
 & =\int_{0}^{1}\tr(\mathbf{H}(t)^{-1}\mathbf{A}^{\top}(\mathbf{H}-\mathbf{I})\mathbf{A}^{\top})dt\\
 & =\mathbb{E}_{v\sim N(0,I)}[v^{\top}\int_{0}^{1}\tr(\mathbf{H}(t)^{-1}\mathbf{A}^{\top}(\mathbf{H}-\mathbf{I})\mathbf{A})dt\cdot v]\\
 & =\int_{0}^{1}\mathbb{E}_{v\sim N(0,I)}[v^{\top}\mathbf{H}(t)^{-1}\mathbf{A}^{T}(\mathbf{H}-\mathbf{I})\mathbf{A}v]dt
\end{align*}
\end{proof}
Note that given $\mathbf{H}(t)^{-1}$, we can estimate the last expression as the sum of $\mathbb{E}_{v\sim N(0,I)}[v^{\top}\mathbf{H}(t)^{-1}\mathbf{A}^{\top}(\mathbf{H}-\mathbf{I})\mathbf{A}v]$.
Maintaining $\mathbf{H}(t)^{-1}$ reduces to the inverse maintainence
problem for $\mathbf{H}$. It is shown in \citep{lee2015efficient}
that a matrix inverse can be maintained efficiently in the following
sense. Suppose we have a sequence of matrices of the form $\mathbf{A}^{\top}\mathbf{D}^{(k)}\mathbf{A}$
where each $\mathbf{D}^{(k)}$ is a slowly-changing diagonal matrix.
Then for each matrix in the sequence, its inverse times any given
vector $v$ can be computed in time $\widetilde{O}\left(\nnz(\mathbf{A})+n^{2}\right).$
We use $\mathbf{W}=\mathbf{S_{x}}^{-2}\mathbf{S_{y}^{2}}$ to calculate
as unbiased estimate of $\log\det\mathbf{H}(x)-\log\det\mathbf{H}(y)$.
\begin{lemma}[{\citep[Theorem 13]{lee2015efficient}\label{lem:inv}}]
Suppose that a sequence of matrices \\ $\mathbf{A}^{\top}\mathbf{D}^{(k)}\mathbf{A}$
for the inverse maintenance problem satisfies the $$\sum_{i}\left(\frac{d_{i}^{(k+1)}-d_{i}^{(k)}}{d_{i}^{(k)}}\right)^{2}=O(1).$$
Then there is an algorithm that with high probability maintains an
$\widetilde{O}\left(\nnz(\mathbf{A})+n^{2}\right)$-time linear system
solver for $r$ rounds in total time $\widetilde{O}\left(r(\nnz(\mathbf{A})+ n^{2}+n^{\omega})\right)$
\end{lemma}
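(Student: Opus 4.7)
The plan is to maintain a multiplicative approximation $\tilde{\mathbf{D}}^{(k)}$ of $\mathbf{D}^{(k)}$ together with the inverse $\mathbf{M}^{(k)} \approx (\mathbf{A}^{\top}\tilde{\mathbf{D}}^{(k)}\mathbf{A})^{-1}$, and use $\mathbf{M}^{(k)}$ as a preconditioner for an iterative linear system solver. The key insight is that we do not need to explicitly recompute the inverse when $\mathbf{D}$ changes: as long as our approximation $\tilde{\mathbf{D}}$ is spectrally close to $\mathbf{D}$ (say $\tilde{\mathbf{D}} \approx_\epsilon \mathbf{D}$ entrywise for small constant $\epsilon$), the matrix $\mathbf{M}^{(k)}$ is a good preconditioner, and preconditioned conjugate gradient (or Richardson iteration) converges in $O(\log(1/\delta))$ iterations, each of cost $\tilde{O}(\nnz(\mathbf{A})+n^{2})$ (an $\mathbf{A}$-multiply plus an $\mathbf{M}$-multiply).

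The maintenance strategy has three components. First, \emph{lazy sparse updates}: at each round, update $\tilde{d}_i \leftarrow d_i^{(k)}$ only for the entries $i$ where $|\tilde{d}_i / d_i^{(k)} - 1|$ exceeds a threshold $\epsilon$, and absorb each such change into $\mathbf{M}$ by a rank-one Woodbury update at cost $O(n^2)$ per update. Second, \emph{leverage-score based sampling}: one can probabilistically sample a small number of rows to refresh, with probabilities proportional to the leverage scores times the relative squared change $(\Delta d_i/d_i)^2$, and a matrix Chernoff/Bernstein bound guarantees that $(\mathbf{A}^{\top}\tilde{\mathbf{D}}\mathbf{A})^{-1/2}(\mathbf{A}^{\top}\mathbf{D}\mathbf{A})(\mathbf{A}^{\top}\tilde{\mathbf{D}}\mathbf{A})^{-1/2}$ stays in $[1/2, 2]$ with high probability. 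Third, a \emph{periodic full recomputation} of $\mathbf{M}$ from scratch at cost $O(n^{\omega})$, which resets any accumulated numerical and approximation error.

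The amortized cost analysis is the main ingredient, and the step I expect to be the main obstacle. The hypothesis $\sum_i (\Delta d_i/d_i)^2 = O(1)$ per round means that across $r$ rounds the total relative squared drift is $O(r)$. Combined with the fact that the sum of leverage scores of $\mathbf{A}$ equals $n$, the expected number of rank-one Woodbury updates across $r$ rounds is $\tilde{O}(rn)$; together with $O(n^2)$ per Woodbury update this gives $\tilde{O}(rn \cdot n^2)$ total --- one has to be careful here and combine the leverage-score reweighting with a suitable potential function to collapse this to the claimed $\tilde{O}(r(\nnz(\mathbf{A})+n^{2}+n^{\omega}))$ bound. The subtlety is that large individual drifts may force several updates in the same round, so the amortization must be charged against the running $\ell_2$ drift of $\ln \mathbf{D}^{(k)}$ rather than against each round uniformly; this is exactly where the $O(1)$ total-change hypothesis is tight.

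Finally, for the query side, each linear system $(\mathbf{A}^{\top}\mathbf{D}^{(k)}\mathbf{A})x=b$ is solved by preconditioned iteration using $\mathbf{M}^{(k)}$, which under the above spectral guarantee converges in $O(\log(n/\delta))$ iterations, each costing one $\mathbf{A}$-apply ($\nnz(\mathbf{A})$), one $\mathbf{A}^{\top}$-apply ($\nnz(\mathbf{A})$), and one $\mathbf{M}$-apply ($n^{2}$), for a total of $\tilde{O}(\nnz(\mathbf{A})+n^{2})$ per solve. Summed over $r$ rounds and combined with the maintenance cost from the previous step, one obtains the claimed total-time bound, with the lemma then following by invoking the concentration and potential-function analysis of Lee--Sidford directly.
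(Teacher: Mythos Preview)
The paper does not prove this lemma at all: it is stated as a direct citation of \cite[Theorem 13]{lee2015efficient} and used as a black box. So there is nothing in the paper to compare your argument against; your proposal is effectively a sketch of the Lee--Sidford proof itself, which the present paper simply imports.

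That said, your sketch has the right architecture (lazy thresholded updates, leverage-score sampling for spectral approximation, preconditioned iterative solves at $\tilde O(\nnz(\mathbf{A})+n^2)$ per solve), but it is missing the one idea that actually closes the amortization gap you yourself flag. Doing $\tilde O(rn)$ rank-one Woodbury updates at $O(n^2)$ each gives $\tilde O(rn^3)$, not $\tilde O(r n^\omega)$, and no potential function on $\ln \mathbf{D}$ alone will fix that. The missing ingredient in Lee--Sidford is \emph{batched low-rank updates via fast matrix multiplication}: when $k$ entries of $\tilde{\mathbf{D}}$ need refreshing simultaneously, one performs a single rank-$k$ Woodbury update in time $O(n^2 k^{\omega-2})$ rather than $k$ separate $O(n^2)$ updates. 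Entries are bucketed geometrically by how much they have drifted, and each bucket of size roughly $2^j$ is flushed as a batch; the $O(1)$ bound on $\sum_i (\Delta d_i/d_i)^2$ per round controls how often each bucket fills, and summing $n^2 k^{\omega-2}$ over the bucket schedule is what yields the $\tilde O(r n^\omega)$ total. Without this batching-plus-fast-rectangular-multiply step, your outline cannot reach the stated bound.
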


We note that the condition $\sum_{i}\left(\frac{d_{i}^{(k+1)}-d_{i}^{(k)}}{d_{i}^{(k)}}\right)^{2}=O(1)$
is satisfied since
\begin{align*}
\sum_{i}\left(\frac{d_{i}^{(k+1)}-d_{i}^{(k)}}{d_{i}^{(k)}}\right)^{2}&=\sum_{i}\left(\frac{(s_{i}^{(k+1)})^{-2}-(s_{i}^{(k)})^{-2}}{(s_{i}^{(k)})^{-2}}\right)^{2}\\
&=O\left(\sum_{i}\left(\frac{s_{i}^{(k+1)}-s_{i}^{(k)}}{s_{i}^{(k)}}\right)^{2}\right) \\&=O\left(\|x^{(k+1)}-x^{(k)}\|_{x^{(k)}}^{2}\right).
\end{align*}
Putting these together we have the following unbiased estimator for
$\sqrt{\det\mh(x)/\det\mh(y)}$:

Compute $X=\frac{e}{2}\cdot\prod_{j=1}^{i}Y_{j}\mbox{ with probability }\dfrac{1}{e\cdot i!}$
where each $Y_{j}$ is an iid sample generated as follows:
\begin{enumerate}
\item Pick $v\sim N(0,I)$ and $t$ uniformly in $[0,1].$
\item Set $\mathbf{W}=\mathbf{S_{x}}^{-2}\mathbf{S_{y}^{2}}.$
\item Compute $Y=v^{\top}\mathbf{H}(t)^{-1}\mathbf{A}^{\top}(\mathbf{H}(1)-\mathbf{I})\mathbf{A}v$
where $\mh(t)=\mathbf{A}^{\top}(\mathbf{I}+t(\mathbf{W}-\mathbf{I}))\mathbf{A}$
using efficient inverse maintenance.
\end{enumerate}
We need one more trick. In the algorithm, at each step we need to
compute $\min\left\{ 1,\frac{p(y\rightarrow x)}{p(x\rightarrow y)}\right\} $.
While we can approximate the ratio inside the min, this might make
the overall probability incorrect due to the min function not being
smooth. So instead we propose a smoother filter. This might have other
applications.
\begin{lemma}[Smooth Metropolis filter]
Let the probability of selecting the state $y$ from the state $x$ of an ergodic Markov chain be $p(x\rightarrow y)$.
Then accepting the step $x\rightarrow y$ with probability $\dfrac{p(y\rightarrow x)}{p(y\rightarrow x)+p(x\rightarrow y)}$
gives uniform stationary distribution.
\end{lemma}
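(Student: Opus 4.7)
The plan is to verify detailed balance directly. Write the effective one-step transition kernel as the product of proposal and acceptance: for $x\neq y$,
\[
q(x\to y) \;=\; p(x\to y)\cdot\frac{p(y\to x)}{p(x\to y)+p(y\to x)} \;=\; \frac{p(x\to y)\,p(y\to x)}{p(x\to y)+p(y\to x)}.
\]
The key observation is that the right-hand side is manifestly symmetric under swapping $x$ and $y$, so $q(x\to y) = q(y\to x)$ for every pair of distinct states.

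Next I would invoke the standard fact that a symmetric transition kernel satisfies detailed balance with respect to the uniform measure, i.e.\ $\mu(x)\,q(x\to y) = \mu(y)\,q(y\to x)$ when $\mu$ is constant. Detailed balance implies stationarity: integrating in $x$ yields $\int \mu(x)\,q(x\to y)\,dx = \mu(y)\int q(y\to x)\,dx = \mu(y)$, after accounting for the rejection mass (which stays at the same point and therefore preserves $\mu$). By the hypothesis of ergodicity, the stationary distribution is unique, so it must be the uniform distribution.

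The argument is essentially one line once the acceptance ratio is substituted, so there is no real obstacle — the only thing to be careful about is the handling of the rejection (self-loop) mass, which contributes equally on both sides of the detailed balance equation and hence does not disturb stationarity. This smooth filter is a drop-in replacement for the Metropolis $\min\{1,\cdot\}$ rule whenever one only has access to a noisy estimator of the ratio $p(y\to x)/p(x\to y)$, because the map $r\mapsto r/(1+r)$ is Lipschitz and so small multiplicative errors in the ratio translate to small additive errors in the acceptance probability, unlike the non-smooth $\min$.
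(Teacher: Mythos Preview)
Your proof is correct and follows essentially the same route as the paper: compute the effective transition probability $q(x\to y)=\dfrac{p(x\to y)\,p(y\to x)}{p(x\to y)+p(y\to x)}$, observe it is symmetric in $x$ and $y$, and conclude detailed balance with respect to the uniform distribution (hence stationarity, with uniqueness from ergodicity). Your additional remarks about the self-loop mass and the Lipschitz motivation for using $r/(1+r)$ in place of $\min\{1,r\}$ are accurate and match the surrounding discussion in the paper.
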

\begin{proof}
Let $\tilde{p}(x \rightarrow y)$ be the probability of taking a step from $x$ to $y$. Then, $\tilde{p}$ satisfies detailed balance.
\begin{align*}
\tilde{p}(x\rightarrow y) & =p(x\rightarrow y)\cdot\dfrac{p(y\rightarrow x)}{p(y\rightarrow x)+p(x\rightarrow y)}\\
 & =\dfrac{p(x\rightarrow y)p(y\rightarrow x)}{p(y\rightarrow x)+p(x\rightarrow y)}\\
 & =p(y\rightarrow x)\cdot\dfrac{p(x\rightarrow y)}{p(y\rightarrow x)+p(x\rightarrow y)}\\
 & =\tilde{p}(y\rightarrow x)
\end{align*}
So, $\tilde{p}(x\rightarrow y)=\tilde{p}(y\rightarrow x)$ for all
$x$ and $y$. Hence the stationary distribution is uniform.
\end{proof}

For the Dikin walk, $\frac{p(y\rightarrow x)}{p(x\rightarrow y)}=\sqrt{\frac{\det(\mh_{x})}{\det(\mh_{y})}}$.
Note that the rejection probability function $\frac{p(y\rightarrow x)}{p(y\rightarrow x)+p(x\rightarrow y)}\allowbreak=\frac{\frac{p(y\rightarrow x)}{p(x\rightarrow y)}}{1+\frac{p(y\rightarrow x)}{p(x\rightarrow y)}}$
is increasing in $\frac{p(y\rightarrow x)}{p(x\rightarrow y)}$.
As Dikin barrier is strongly self-concordant (Lemma \ref{lem:dssc})
and by \eqref{eq:rejection}, we get that with probability at least
$0.99$, for $y$ randomly drawn from $E_{x}$, $\frac{\vol(E_{x}(r))}{\vol(E_{y}(r))}\geq0.9922$.
Hence, the probability of not rejecting at each step at least $0.498$ with large probability.
\begin{proof}[Proof of Theorem \ref{thm:dikin}]
Implementing Dikin walk requires mai\-ntaining matrices $\mathbf{H}_{t}=\mathbf{A}^{\top}\mathbf{S}_{t}^{-2}\mathbf{A}$
corresponding to point $x_{t}$. \ref{lem:inv} shows that this can
be done in $\widetilde{O}\left(n^{\omega}+r(nnz(\mathbf{A})+n^{2})\right)$
time where $r$ is the number of steps in the chain. Additionaly,
each step requires calculating the rejection probability which is
a smooth function in $\dfrac{\det(\mathbf{H}_{t})}{\det(\mathbf{H}_{t+1})}$
and hence can be calculated in $\widetilde{O}\left(nnz(\mathbf{A})+n^{2}\right)$
amortized time using lemmas \ref{lem:det} and \ref{lem:logdet}.
\end{proof}

\section{Strong Self-Concordance of other barriers}\label{section5}

Here we analyze the strong self-concordance of the universal and entropic
barriers.
\begin{proof}[Proof of Lemma \ref{lem:SSC-KLS}.]
The entropic barrier is the dual of 
\[
f(\theta)=\log(\int_{x\in K}\exp(\theta^{\top}x)dx).
\]
Then, its the first three derivatives are moments \citep{bubeck2014entropic}:
\begin{align*}
Df(\theta)[h_{1}] & =\frac{\int_{x\in K}x^{\top}h_{1}\exp(\theta^{\top}x)dx}{\int_{x\in K}\exp(\theta^{\top}x)dx}\\
 & =\E_{x\sim p_{\theta}}x^{\top}h_{1}.
\end{align*}
where $p_{\theta}$ is the corresponding exponential distribution
with support $K$.
\begin{align*}
D^{2}f(\theta)[h_{1},h_{2}] & =\frac{\int_{x\in K}x^{\top}h_{1}x^{\top}h_{2}\exp(\theta^{\top}x)dx}{\int_{x\in K}\exp(\theta^{\top}x)dx}\\
&\quad -\frac{\prod_{i=1}^2\int_{x\in K}x^{\top}h_{i}\exp(\theta^{\top}x)dx}
%\cdot\int_{x\in K}x^{\top}h_{2}\exp(\theta^{\top}x)dx}
{\left(\int_{x\in K}\exp(\theta^{\top}x)dx\right)^{2}}\\
 & =\E_{x\sim p_{\theta}}h_{2}^{\top}xx^{\top}h_{1}-h_{2}^{\top}\mu\mu^{\top}h_{1}\\
 & =\E_{x\sim p_{\theta}}(x-\mu)^{\top}h_{1}\cdot(x-\mu)^{\top}h_{2}
\end{align*}
Next, we note that
\begin{align*}
D\mu[h] & =D\frac{\int_{x\in K}x\exp(\theta^{\top}x)dx}{\int_{x\in K}\exp(\theta^{\top}x)dx}[h]\\
 & =\frac{\int_{x\in K}xx^{\top}h\exp(\theta^{\top}x)dx}{\int_{x\in K}\exp(\theta^{\top}x)dx}\\
 &\quad -\frac{\int_{x\in K}x\exp(\theta^{\top}x)dx}{\int_{x\in K}\exp(\theta^{\top}x)dx}\cdot\frac{\int_{x\in K}x^{\top}h\exp(\theta^{\top}x)dx}{\int_{x\in K}\exp(\theta^{\top}x)dx}\\
 & =\E_{x\sim p_{\theta}}xx^{\top}h-\mu\mu^{\top}h\\
 & =\E_{y\sim p_{\theta}}(y-\mu)(y-\mu)^{\top}h.
\end{align*}
So, we have
\begin{align*}
&D^{3}f(\theta)[h_{1},h_{2},h_{3}]\\
=&  \E_{x\sim p_{\theta}}(-\E_{y\sim p_{\theta}}(y-\mu)(y-\mu)^{\top}h_{3})^{\top}h_{1}\cdot(x-\mu)^{\top}h_{2}\\
 & +\E_{x\sim p_{\theta}}(x-\mu)^{\top}h_{1}\cdot(-\E(y-\mu)(y-\mu)^{\top}h_{3})^{\top}h_{2}\\
 & +\E_{x\sim p_{\theta}}(x-\mu)^{\top}h_{1}\cdot(x-\mu)^{\top}h_{2}\cdot(x-\mu)^{\top}h_{3}\\
= & \E_{x\sim p_{\theta}}(x-\mu)^{\top}h_{1}\cdot(x-\mu)^{\top}h_{2}\cdot(x-\mu)^{\top}h_{3}.
\end{align*}
By \citep[(2.15)]{nesterov1994interior}, we have that
\[
D^{2}f^{*}(x_{\theta})[h_{1},h_{2}]=h_{1}^{\top}\nabla^{2}f(\theta)^{-1}h_{2}
\]
and
\begin{align*}
   &D^{3}f^{*}(x_{\theta})[h_{1},h_{2},h_{3}]\\&\quad=-D^{3}f(\theta)[\nabla^{2}f(\theta)^{-1}h_{1},\nabla^{2}f(\theta)^{-1}h_{2},\nabla^{2}f(\theta)^{-1}h_{3}]
\end{align*}

where $x_{\theta}=\nabla f(\theta)$. Hence, we have
\begin{align*}
 & \nabla^{2}f^{*}(x_{\theta})^{-\frac{1}{2}}D^{3}f^{*}(x_{\theta})[h]\nabla^{2}f^{*}(x_{\theta})^{-\frac{1}{2}}\\
&=  -\E_{x\sim p_{\theta}}\nabla^{2}f(\theta)^{-\frac{1}{2}}(x-\mu)(x-\mu)^{\top}\nabla^{2}f(\theta)^{-\frac{1}{2}}\\
&\qquad \cdot(x-\mu)^{\top}\nabla^{2}f(\theta)^{-1}h\\
&= -\E_{x\sim\tilde{p}_{\theta}}xx^{\top}\cdot x^{\top}\nabla^{2}f(\theta)^{-\frac{1}{2}}h
\end{align*}
where $\tilde{p}_{\theta}$ is the distribution given by $\nabla^{2}f(\theta)^{-\frac{1}{2}}(x-\mu)$
where $x\sim p_{\theta}$. Note that $\tilde{p}_{\theta}$ is isotropic
and \citep[Fact 6.1]{eldan2013thin} shows that 
\begin{equation}
\max_{\|v\|_{2}=1}\norm{\E_{x\sim\tilde{p}_{\theta}}xx^{T}(x^{T}v)}_{F}=O(\psi_{n}).\label{eq:K_n_psi_n}
\end{equation}

Hence, we have that
\begin{align*}
   &\norm{\nabla^{2}f^{*}(x_{\theta})^{-\frac{1}{2}}D^{3}f^{*}(x_{\theta})[h]\nabla^{2}f^{*}(x_{\theta})^{-\frac{1}{2}}}_{F}\\
   &\quad =O(\psi_{n})\norm{\nabla^{2}f^{*}(x_{\theta})^{-\frac{1}{2}}h}_{2}=O(\psi_{n})\|h\|_{x_{\theta}}.
\end{align*}

This proves the lemma for the entropic barrier (recall that the entropic
barrier is $f^{*}$ instead of $f$).

For the universal barrier, first we recall that the polar of a convex
set $K$ is $K^{\circ}(x)=\left\{ z:z^{\top}(y-x)\le1\quad\forall y\in K\right\} $
and the barrier function is 
\[
\Phi(x)=\log\vol(K^{\circ}(x)).
\]
Its derivatives have the following identities \citep[Page 52]{nesterov1994interior}.
Here the random point $y$ is drawn uniformly from the polar $K^{\circ}(x)$.
\begin{align*}
\nabla^{2}\Phi(x)= & (n+2)(n+1)\E yy{}^{\top}-(n+1)^{2}\E y\E y^{\top},\\
D\nabla^{2}\Phi(x)[h]= & -(n+1)(n+2)(n+3)\E yy^{\top}(y^{\top}h)\\
&+(n+1)^{2}(n+2)\E yy^{\top}\cdot\E y^{\top}h\\
 & +2(n+1)^{2}(n+2)\E y(y^{\top}h)\cdot\E y^{\top}\\
 &-2(n+1)^{3}\E y\cdot\E y^{\top}\cdot\E y^{\top}h
\end{align*}
Let $\mu=\E y$, we can re-write the derivatives as follows:
\begin{align*}
\nabla^{2}\Phi(x)= & (n+2)(n+1)\E(y-\mu)(y-\mu)^{\top}+(n+1)\mu\mu^{\top}\\
D\nabla^{2}\Phi(x)[h]= & -\Pi_{i=1}^3 (n+i)\E(y-\mu)(y-\mu)^{\top}(y-\mu)^{\top}h\\
 & -2(n+2)(n+1)(\E(y-\mu)(y-\mu)^{\top}\mu^{\top}h\\
 & +\E\mu(y-\mu)^{\top}(y-\mu)^{\top}h+\E(y-\mu)\mu^{\top}(y-\mu)^{\top}h)\\
 & -2(n+1)\mu\mu^{\top}\mu^{\top}h.
\end{align*}
Without loss of generality, we assume $\nabla^{2}\Phi(x)=I$. Then,
we have
\[
(n+2)(n+1)\E(y-\mu)(y-\mu)^{\top}\preceq I\qquad\text{and}\qquad(n+1)\mu\mu^{\top}\preceq I.
\]
For the first term, \eqref{eq:K_n_psi_n} shows that
\[
\|(n+1)(n+2)(n+3)\E(y-\mu)(y-\mu)^{\top}(y-\mu)^{\top}h\|_{F}=O(\psi_{n}).
\]
The Frobenius norm of next three terms are bounded by
\[
2\left|\mu^{\top}h\right|\norm{(n+2)(n+1)\E(y-\mu)(y-\mu)^{\top}}_{F}\le2\sqrt{n}\norm{\mu}\le2
\]
and so is the last term: 
\[
2\norm{(n+1)\mu\mu^{\top}}_{F}\left|\mu^{\top}h\right|\le2.
\]
\end{proof}
To conclude this section, we remark that the universal and entropic barriers do \emph{not }satisfy our symmetry condition. Consider a rotational cone $C=\left\{ x\,:\,\sum_{i=2}^{n}x_{i}^{2}\le x_{1}^{2},0\le x_{1} \le 1\right\} $ and any point $x=(x_{1},0,\ldots,$0). Then symmetric body
around $x$, namely $K=C\cap(x-C)$ has the property that (a) the
John ellipsoid satisfies $E\subset K\subset\sqrt{n}C$ (as it does
for any symmetric convex body) and (b) the inertial ellipsoid has
a sandwiching ratio of $n$, proving that $\bar{\nu}\ge n=\Omega(\nu^{2}).$
For the entropic barrier, we have a similar result because multiplying
the indicator function of this symmetric convex body with an exponential
function of the form $e^{-c^{T}x}$ still has the same property for
the inertial ellipsoid. This example highlights the advantages of
barriers with John-like ellipsoids (log barrier, LS barrier) vs Inertia-like
ellipsoids (universal, entropic).

\bibliographystyle{alpha}
\bibliography{arxiv}
\appendix
\section{Proofs}
\subsection{Proof of Lemma \ref{lem:global_self_concordant}}\label{proof:2}
\begin{proof}
Let $h=y-x$, $x_{t}=x+th$ and $\phi(t)=h^{\top}\mh(x_{t})h$.
Then, 
\[
\left|\phi'(t)\right|=\left|h^{\top}\frac{d}{dt}\mh(x_{t})h\right|\leq2\|h\|_{x_{t}}^{3}=2\phi(t)^{3/2}.
\]
Hence, we have $\left|\frac{d}{dt}\frac{1}{\sqrt{\phi(t)}}\right|\leq1$.
Therefore, $\frac{1}{\sqrt{\phi(t)}}\geq\frac{1}{\sqrt{\phi(0)}}-t$
and, 
\begin{equation}
\phi(t)\leq\frac{\phi(0)}{(1-t\sqrt{\phi(0)})^{2}}.\label{eq:phi_bound}
\end{equation}

Now we fix any $v$ and define $\psi(t)=v^{\top}\mh(x_{t})v$. Then,
\[
\left|\psi'(t)\right|=\left|v^{\top}\frac{d}{dt}\mh(x_{t})v\right|\leq2\|h\|_{x_{t}}\|v\|_{x_{t}}^{2}=2\phi(t)\psi(t).
\]
Using \eqref{eq:phi_bound} at the end, we have
\[
\left|\frac{d}{dt}\ln\psi(t)\right|\leq\frac{2\sqrt{\phi(0)}}{(1-t\sqrt{\phi(0)})}.
\]
Integrating both sides from $0$ to $1$, 
\[
\left|\ln\frac{\psi(1)}{\psi(0)}\right|\leq\int_{0}^{1}\frac{2\sqrt{\phi(0)}}{(1-t\sqrt{\phi(0)})}dt=2\ln(\frac{1}{1-\sqrt{\phi(0)}}).
\]
The result follows from this with $\psi(1)=v^{\top}\mh(y)v$, $\psi(0)=v^{\top}\mh(x)v$,
and $\phi(0)=\|x-y\|_{x}^{2}$.
\end{proof}

\subsection{Proof of Lemma \ref{lem:global_strongly_self_concordant}}\label{proof:4}
\begin{proof}
Let $x_{t}=(1-t)x+ty$. Then, we have
\begin{align*}
&\|\mh(x)^{-\frac{1}{2}}(\mh(y)-\mh(x))\mh(x)^{-\frac{1}{2}}\|_{F} \\
&\quad =\int_{0}^{1}\|\mh(x)^{-\frac{1}{2}}\frac{d}{dt}\mh(x_{t})\mh(x)^{-\frac{1}{2}}\|_{F}dt.
\end{align*}

We note that $\mh$ is self-concordant. Hence, Lemma \ref{lem:global_self_concordant}
shows that
\begin{align*}
&\|\mh(x)^{-\frac{1}{2}}\frac{d}{dt}\mh(x_{t})\mh(x)^{-\frac{1}{2}}\|_{F}^{2}\\
&\quad =\tr\mh(x)^{-1}\left(\frac{d}{dt}\mh(x_{t})\right)\mh(x)^{-1}\left(\frac{d}{dt}\mh(x_{t})\right)\\
 &\quad \leq\frac{1}{\left(1-\Vert x-x_{t}\Vert_{x}\right)^{4}}\tr\mh(x_{t})^{-1}\left(\frac{d}{dt}\mh(x_{t})\right)\mh(x_{t})^{-1}\left(\frac{d}{dt}\mh(x_{t})\right)\\
 &\quad \leq\frac{4}{\left(1-\Vert x-x_{t}\Vert_{x}\right)^{4}}\|x-x_{t}\|_{x_{t}}^{2}\\
 &\quad \leq\frac{4}{\left(1-\Vert x-x_{t}\Vert_{x}\right)^{6}}\|x-x_{t}\|_{x}^{2}
\end{align*}

where we used the strong self-concordance in the second inequality and Lemma \ref{lem:global_self_concordant}
again for the last inequality. Hence, 
\begin{align*}
\|\mh(x)^{-\frac{1}{2}}(\mh(y)-\mh(x))\mh(x)^{-\frac{1}{2}}\|_{F} & \leq\int_{0}^{1}\frac{2\|x-x_{t}\|_{x}}{\left(1-\Vert x-x_{t}\Vert_{x}\right)^{3}}dt\\
 & =\int_{0}^{1}\frac{2t\|x-y\|_{x}}{(1-t\|x-y\|_{x})^{3}}dt\\
 & =\frac{\|x-y\|_{x}}{(1-\|x-y\|_{x})^{2}}.
\end{align*}
\end{proof}

\end{document}